\DeclareMathAlphabet{\altmathcal}{OMS}{cmsy}{m}{n}
\newtheorem{theorem}{Theorem}
\renewcommand{\paragraph}{%
  \@startsection{paragraph}{4}%
  {\z@}{1.2ex \@plus 1ex \@minus .2ex}{-1em}%
  {\normalfont\normalsize\bfseries}%
}
\newcommand{\smidge}{{\kern .05em}}
\newcommand{\Colon}{\smidge\colon\smidge}
\newcommand{\bits}{\{0,1\}}
\newcommand{\Z}{\mathbb{Z}}
\newcommand{\R}{\mathbb{R}}
\DeclareMathOperator{\prob}{\textnormal{Pr}}
\DeclareMathOperator{\e}{E}
\newcommand{\getsr}{{\;{\leftarrow{\hspace*{-3pt}\raisebox{.75pt}{$\scriptscriptstyle\$$}}}\;}}
\newcommand{\getdist}[1]{{\;{\leftarrow{\hspace*{-3pt}\raisebox{.75pt}{$\scriptscriptstyle #1$}}}\;}}
\newcommand{\Prob}[1]{\prob\left[#1\right]}
\newcommand{\CondProb}[2]{\prob\left[#1\;|\;#2\right]}
\newcommand{\Ex}[1]{\textnormal{E}\left[#1\right]}
\newcommand{\epsilonPrecR}[1]{\epsilon_{prec}^{#1}}
\newcommand{\advA}{\altmathcal{A}}
\newcommand{\false}{\textsf{false}}
\newcommand{\Sketch}{SS}
\newcommand{\Rec}{Rec}
\newcommand{\sketchval}{z}
\newcommand{\minent}{\mu}
\newcommand{\redminent}{\mu'}
\newcommand{\G}{\mathbb{G}}
\newcommand{\minentropy}{\mathbf{\mathsf{\tilde{H}}}_\infty}
\newcommand{\condminentropy}[2]{\minentropy({#1}\,|\,{#2})}
\newcommand{\prfval}{Y}
\newcommand{\blindval}{X}
\newcommand{\blindprfval}{Y'}
\newcommand{\hashspace}{U}
\newcommand{\inputspace}{\mathbb{I}}
\newcommand{\len}{\ell}
\newcommand{\emblen}{d}
\newcommand{\embed}{\simhash}
\newcommand{\simhash}{\altmathcal{F}}
\newcommand{\simimage}{v}
\newcommand{\imagespace}{\altmathcal{W}}
\newcommand{\simimagespace}{\altmathcal{V}}
\newcommand{\metricspace}{\altmathcal{Y}}
\newcommand{\distimage}{\Delta_\imagespace}
\newcommand{\image}{w}
\newcommand{\condition}{cond}
\newcommand{\embimage}{p}
\newcommand{\imageset}{\altmathcal{B}}
\newcommand{\bucketset}{B}
\newcommand{\threshold}{T}
\newcommand{\coarsethreshold}{k}
\newcommand{\flipbias}{\gamma}
\newcommand{\repeats}{q}
\newcommand{\flip}{\textnormal{Flip}}
\newcommand{\pnoise}{\tilde{p}}
\newcommand{\vnoise}{\tilde{v}}
\newcommand{\indexfunc}{I}
\newcommand{\indexfuncset}{\altmathcal{I}}
\newcommand{\baseline}{\epsilon_{base}}
\newcommand{\probThreshold}{T_{adv}}
\newcommand{\advAprec}{\advA_{\textrm{pre}}}
\newcommand{\AUC}{\textnormal{AUC}}
\newcommand{\advAauc}{\advA_{\textrm{auc}}}
\newcommand{\PRED}{\textnormal{PRED}}
\newcommand{\epsilonAcc}{\epsilon_{\mathrm{acc}}}
\newcommand{\epsilonAuc}{\epsilon_{\mathrm{auc}}}
\newcommand{\epsilonRecall}{r}%\epsilon_{\mathrm{recall}}}
\newcommand{\epsilonRecallThresh}{\rho}%\epsilon_{\mathrm{recall}}}
\newcommand{\advAacc}{\advA_{\textrm{acc}}}
\newcommand{\true}{\mathsf{true}}
\newcommand{\EmbedScheme}{\mathbb{E}}
\newcommand{\Embed}{\textnormal{Emb}}
\newcommand{\Sim}{\textnormal{Sim}}
\newcommand{\dist}{\mathcal{D}}
\newcommand{\distf}{\dist_\simhash}
\newcommand{\setting}{\pi}
\newcommand{\vecimage}{\mathbf{\image}}
\newcommand{\vecembimage}{\mathbf{\embimage}}
\newcommand{\myInd}{\hspace*{1em}}
\newcommand{\bnm}{\begin{newmath}}
\newcommand{\enm}{\end{newmath}}
\newcommand{\bne}{\begin{newequation}}
\newcommand{\ene}{\end{newequation}}
\newenvironment{newmath}{\begin{displaymath}%
\setlength{\abovedisplayskip}{4pt}%
\setlength{\belowdisplayskip}{4pt}%
\setlength{\abovedisplayshortskip}{6pt}%
\setlength{\belowdisplayshortskip}{6pt} }{\end{displaymath}}
\newenvironment{newequation}{\begin{equation}%
\setlength{\abovedisplayskip}{4pt}%
\setlength{\belowdisplayskip}{4pt}%
\setlength{\abovedisplayshortskip}{6pt}%
\setlength{\belowdisplayshortskip}{6pt} }{\end{equation}}
\newcommand{\secref}[1]{Section~\ref{#1}}
\newcommand{\apref}[1]{Appendix~\ref{#1}}
\newcommand{\figref}[1]{Figure~\ref{#1}}
\newcommand{\tabref}[1]{Table~\ref{#1}}
\newcommand{\verylongrightarrow}[1]             %longleft and rightgoing arrows
      {\setlength{\unitlength}{.01in}           %for protocols
      \begin{picture}(#1,1) \put(0,0){\vector(1,0){#1}} \end{picture}}
\newcommand{\gamesfontsize}{\footnotesize}
\newcommand{\stretchval}{1.2}
\newcommand{\fpage}[2]{\framebox{\begin{minipage}{#1\textwidth}\setstretch{\stretchval}\gamesfontsize #2 \end{minipage}}}
\newcommand{\hfpagess}[4]{
		\begin{tabular}{c@{\hspace*{.5em}}c}
		\framebox{\begin{minipage}[t]{#1\textwidth}\setstretch{\stretchval}\gamesfontsize #3 \end{minipage}}
		&
		\framebox{\begin{minipage}[t]{#2\textwidth}\setstretch{\stretchval}\gamesfontsize #4 \end{minipage}}
		\end{tabular}
	}
\def\codestretch{\stretchval}
\newcommand{\hpagess}[4]{
    \begin{tabular}[t]{c@{\hspace*{1.5em}}c}
	   \adjustbox{valign=c}{\begin{minipage}[t]{#1\textwidth}\setstretch{\codestretch} #3 \end{minipage}}
	   &
     \adjustbox{valign=c}{\begin{minipage}[t]{#2\textwidth}\setstretch{\codestretch} #4 \end{minipage}}
    \end{tabular}
	}
\newlength{\saveparindent}
\newlength{\saveparskip}
\newcounter{ctr}
\newenvironment{newitemize}{%
\begin{list}{\mbox{}\hspace{5pt}$\bullet$\hfill}{\labelwidth=15pt%
\labelsep=5pt \leftmargin=20pt \topsep=3pt%
\setlength{\listparindent}{\saveparindent}%
\setlength{\parsep}{\saveparskip}%
\setlength{\itemsep}{3pt} }}{\end{list}}
\definecolor{clr3_1}{RGB}{203,106,73}
\definecolor{clr3_2}{RGB}{164,108,183}
\definecolor{clr3_3}{RGB}{122,164,87}
\definecolor{clr5_1}{RGB}{75,174,141}
\definecolor{clr5_2}{RGB}{202,86,136}
\definecolor{clr5_3}{RGB}{133,160,64}
\definecolor{clr5_4}{RGB}{135,116,202}
\definecolor{clr5_5}{RGB}{202,112,64}
\definecolor{clr2_1}{RGB}{179,102,158}
\definecolor{clr2_2}{RGB}{152,152,77}
\definecolor{clr6_1}{RGB}{167,221,226}
\definecolor{clr6_2}{RGB}{230,184,179}
\definecolor{clr6_3}{RGB}{155,194,175}
\definecolor{clr6_4}{RGB}{209,187,223}
\definecolor{clr6_5}{RGB}{212,217,182}
\definecolor{clr6_6}{RGB}{170,196,226}
\definecolor{clr6_1}{RGB}{121,113,168}
\definecolor{clr6_2}{RGB}{121,177,69}
\definecolor{clr6_3}{RGB}{154,72,190}
\definecolor{clr6_4}{RGB}{89,141,108}
\definecolor{clr6_5}{RGB}{183,73,89}
\definecolor{clr6_6}{RGB}{185,124,63}
\begin{document}
%-------------------------------------------------------------------------------

%don't want date printed
\date{}

\title{Increasing Adversarial Uncertainty to Scale Private Similarity Testing}

%for single author (just remove % characters)
\author{
  {\rm Yiqing Hua$^{1,2}$, Armin Namavari$^{1,2}$, Kaishuo Cheng$^{2}$, Mor
  Naaman$^{1,2}$, Thomas Ristenpart$^{1,2}$}\\
  $^{1}$ Cornell Tech \hspace*{3em} $^{2}$ Cornell University
} % end author

\maketitle

%-------------------------------------------------------------------------------
\begin{abstract}

Social media and other platforms rely on automated detection of abusive content
to help combat disinformation, harassment, and abuse.  
One common approach is to check user content for similarity against a server-side database of problematic items.
However, this method fundamentally endangers user privacy.
Instead, we target client-side detection, 
notifying only the users when such matches occur to warn them against abusive content.

Our solution is based on privacy-preserving similarity testing. 
Existing approaches rely on expensive
cryptographic protocols that do not scale well to large databases and may
sacrifice the correctness of the matching.  
To contend with this challenge, we propose and formalize the
concept of similarity-based bucketization~(SBB). With SBB, a client reveals a
small amount of information to a database-holding server so that it can generate
a bucket of potentially similar items.  
The bucket is small enough for efficient
application of privacy-preserving protocols for similarity.  
To analyze the
privacy risk of the revealed information, we introduce a framework
for measuring an adversary's confidence in inferring a predicate about the client
input correctly.  
We develop a practical SBB protocol for image content, and evaluate its client
privacy guarantee with real-world social media data. We then combine SBB
with various similarity protocols, showing that the combination with SBB provides a speedup of
at least $29\times$ on large-scale databases compared to that without, while retaining correctness of over $95\%$.

%-------------------------------------------------------------------------------
\end{abstract}

\section{Introduction}

Faced with various policy-violating activities ranging from
disinformation~\cite{resende2019mis}
to harassment~\cite{hua2020characterizing,matias2015reporting,duggan2014online} and abuse~\cite{bursztein2019rethinking,revenge}, social media companies increasingly rely
on automated algorithms to detect deleterious content.  One
widely used approach is to check that user content is not too similar to
known-bad content.
For example, to detect child sexual abuse
imagery~\cite{bursztein2019rethinking},
some platforms utilize similarity hashing
approaches like PhotoDNA~\cite{photodna} or PDQHash~\cite{pdqhash}.
These approaches map
user-shared images into unique representations that encode perceptual structure,
enabling quick comparisons against a database of hash values.
Such approaches could be helpful for combating other forms of bad content, such
as the viral spread of visual
misinformation on end-to-end encrypted messaging services~\cite{resende2019mis}.
For example, they could augment other efforts to provide users with important context about shared content~\cite{googlefact,whatsappsearch}.

Currently deployed approaches rely on sending user content or a similarity hash of the
content to a moderation service. This risks user privacy. As we detail in the
body, the service can easily match a submitted similarity hash against known
images to learn the content of a user's image with overwhelming confidence.
Privacy can be improved utilizing cryptographic two-party computation
(2PC)~\cite{yao1986generate,kulshrestha2021identifying} techniques to only
reveal matching content to the moderation service and nothing more. The recent
CSAM image detection system proposed by Apple~\cite{apple_csam} goes one step
further and notifies the platform only when the number of matching images 
surpasses a certain threshold.  Automated notification of platforms 
necessarily raises concerns about privacy and accountability (e.g., how to
ensure the system is not used for privacy-invasive search for benign images).

An alternative approach is to have only the client learn the output of similarity
checks, to enable client-side notifications, warning or otherwise informing users.
This may not be suitable for all classes of abusive content, such as CSAM, where
the recipient may be adversarial, but could be useful for other abuse categories
(misinformation, harassment, etc.). 
However, the scale of databases makes it prohibitive both to send 
the known-bad hashes to the client or, should hashes be sensitive, apply 2PC
techniques to ensure as little as possible about the database leaks to clients. 
For an example of the latter, Kulshrestha and Mayer's~\cite{kulshrestha2021identifying} 
private approximate membership computation (PAMC) protocol achieves 
state-of-the-art performance, but nevertheless requires 
about~$27$ seconds to perform a similarity check
against a database with one million images.
The protocol also has an average false negative rate of almost $17\%$ for
slightly transformed images, meaning many similar images may be erroneously
marked as dissimilar.

In this work, 
we target client-side detection,
in order to warn users against abusive content.
To this end, we explore the question of how to scale privacy-preserving image
similarity protocols, while preserving correctness of the similarity testing.
We introduce and formalize the concept of 
similarity-based bucketization (SBB).  The idea is to reveal a small amount of
structured information in a message to a database-holding server, so that it can determine a
bucket of possibly relevant database entries. Ideally the bucket consists of only a 
small fraction of the full database, enabling use of a subsequent similarity
testing protocol on the bucket to perform the final similarity check. We explore
instantiating the testing protocol in a variety of ways. 

The key technical challenge facing SBB is balancing the competing goals of 
minimizing bucket size (efficiency) with leaking as little information as
possible (privacy). For example, one could modify a standard similarity hash,
say PDQHash, to provide only very coarse comparisons. But as we will show, this still leaks
a lot of information to the server, allowing high-confidence attacks
that can associate the coarse hash to the specific content of a client request. More broadly we need a framework
for navigating this tension. 

We propose such a framework.  It formalizes information leakage using a
game-based definition. 
To be specific,
an adversarial server attempts to learn, from an
SBB message generated for some image drawn from an adversary-known
distribution, a predicate about the underlying image. As an important
running example, we use a ``matching predicate'' that checks if 
the underlying image has the same perceptual hash value as that of a known target image. 
Unlike in more traditional cryptographic definitions
(e.g.,~\cite{goldwasser1984probabilistic}), we do not require the adversarial
server to have negligible success (which would preclude efficiency)
and instead offer a range of measures including accuracy
improvement over baseline guessing, adversarial precision, and adversarial
area under the receiver operating characteristic curve~(AUC).  Indeed, there is no one-size-fits-all approach to
measuring privacy damage, and our framework allows one to more broadly
assess risks.

We offer a concrete SBB mechanism that increases adversarial uncertainty
compared to naive approaches. It converts any similarity hash that uses
Hamming distance to a privacy-preserving coarse embedding; we focus on PDQHash
because it is widely supported.  We combine techniques from 
locality-sensitive hashing~\cite{gionis1999similarity} with lightweight noise
mechanisms.  The ultimate algorithm is conveniently simple: apply a standard
PDQHash to an image, choose a designated number~$\emblen$ of bit indices randomly,
flip each selected bit with probability~$\flipbias$, and then send the
resulting~$\emblen$ bits and their indices to the server. An image in the server's database is
included in a bucket should~$\coarsethreshold$ or fewer of the relevant~$\emblen$ bits of its PDQHash
mismatch with those that are sent from the client.  

Using real-world social media data, we empirically assess correctness, efficiency and privacy under various definitions.
We explore various settings of $\emblen$, $\flipbias$, and $\coarsethreshold$, and show that it is possible to ensure average bucket sizes
of~$9.3\%$ of the database,
while: (1)~ensuring that the similar
images are included in the bucket at least 95\% of the time, and (2) an
optimal adversary for the matching predicate achieves less than 50\% precision,
signifying low confidence in matching attacks. We caution that these empirical
results are dataset-dependent, and may not generalize to every potential use case.
Instead they can be interpreted as a proof-of-concept that SBB works in
a realistic scenario.

We then combine our SBB mechanism with various similarity protocols, 
with different privacy guarantees for the server's content.
For the expedient approach of downloading the bucket of server PDQHash values and
performing comparisons on the client side,
SBB provides a speedup of $29\times$ or more.
A full similarity check requires less than $0.5$ seconds for a database of $2^{23}$ images.
We also explore using SBB to speed up an ad hoc similarity protocol based
on secure sketches~\cite{dodis2004fuzzy}, as well as 2PC protocols implemented
in the EMP~\cite{emp-toolkit} and CrypTen~\cite{crypten2020} frameworks. Our experiments
indicate that SBB can provide
speed-ups of $601\times$, $97\times$, and $67\times$,
respectively, and often enables use of 2PC that would fail otherwise
due to the size of the database.

We conclude by discussing various limitations of our results, and open questions
that future work should answer before deployment in practice.
Nevertheless, we expect that our SBB approach will be
useful in a variety of contexts. Encrypted messaging apps 
could use it to help warn users about malicious content, with significantly better privacy
than approaches that send
plaintext data to third-party servers~\cite{googlefact,whatsappsearch}.  
In another setting, social media platforms that currently query their users'
plaintext data to third-party
services to help identify abuse (e.g.,~\cite{threatexchange,joint}) could use
our techniques to improve privacy for their users. 
To facilitate future research, 
our prototype implementation is publicly available.\footnote{\url{https://github.com/vegetable68/sbb}}

\section{Background and Existing Approaches}
\label{sec:overview}

In this section, we provide some background about a key motivating
setting: providing client-side detection of bad content in end-to-end (E2E)
encrypted messaging. That said, our approaches are more general and we 
discuss other deployment scenarios in \apref{ap:mod}.

\paragraph{Content detection and end-to-end encryption.}
Content moderation aims to mitigate abuse on social media platforms, and can
include content removal, content warnings, blocking users, and more.
Most moderation approaches rely on detecting objectionable content, 
particularly at scale where automated
techniques seem to be requisite. Social
media companies often maintain large databases of known adversarial
content~\cite{bursztein2019rethinking,photodna} and compare a client message
with items in the databases to see if the message is sufficiently similar to
some piece of adversarial content.  However,
this approach requires the client to reveal plaintext message content, which
stands in tension with privacy-preserving technologies like E2E
encryption.  On the other hand, leaving contents unmoderated on the platform is
unsatisfactory given the harms caused by abusive content such as misinformation,
child sexual abuse material (CSAM), harassment, and more.

Governments\footnote{\url{https://www.justice.gov/opa/press-release/file/1207081}}
and non-governmental
organizations\footnote{\url{https://www.missingkids.org/e2ee}} have for many
years emphasized the need for  technical innovations that could enable law
enforcement access to encrypted data,  while minimizing risks of privacy
violation~\cite{rozenshtein_child,group_moving,eu_report}.  However, security
experts have repeatedly expressed concern that such `backdoor' access would
fundamentally break the privacy of E2E
encryption~\cite{cdt_new,crodker_dont,portnoy_why,muffet_what} or, if it
provided content blocking functionality, enable problematic censorship~\cite{portnoy_why}.  

In this work, we target mitigations that allow privacy-preserving client-side
detection of content similar to known bad content. We focus on images, as
discussed below. Our protocol is agnostic to how client software uses this
detection capability, 
but we believe that client software should be designed to 
empower users with information and the ability to make their own
decisions about content.
%We discuss a variety of potential features that client-side detection may enable, 
%each with distinct profiles in terms
%of their risks of misuse for privacy violations or censorship versus
%benefits in terms of combating abuse. 

Our techniques may be useful, for example, to mitigate the
increasing use of E2E encrypted messaging for harmful disinformation campaigns~\cite{resende2019mis,gursky2021countering}. 
A widely discussed approach is to warn users against known disinformation.
Recent research~\cite{kaiser2021adapting} has shown that 
when carefully designed,
such warnings are effective in guiding user behaviors to avoid disinformation.
Our work provides a technical solution for the client-side warning mechanism.
To be specific, the proposed system queries whether a client's received content is similar to
known disinformation and returns the answer only to the client.
Such a design avoids both outright censorship and notifying platform operators that a particular client received a
particular piece of content. 
This solution would enable the kinds of user-initiated known content detection
approaches that have been suggested
recently~\cite{mayer_content,callas_thoughts}, and could help complement
existing anti-abuse techniques that do not consider content, such as those used
in WhatsApp~\cite{whatsapp_stop}. 

But warning-style approaches that inform and empower users may not 
be suitable for threats like CSAM, where the recipients of messages can themselves
be bad actors.  Here client software would seemingly have to limit
user choice, automatically blocking detected content and/or notifying some
authority about it.
Recent designs for CSAM mitigations include the Kulshrestha-Mayer
protocol~\cite{kulshrestha2021identifying} (when used to notify the platform)
and the CSAM detection
proposal by Apple~\cite{apple_csam}.
Cryptographers have, in turn, raised the alarm that, while efforts to combat
CSAM are laudable, these platform-notifying systems represent a potential
E2E encryption backdoor that is subject to misuse by platform operators or
governments~\cite{mckinney_apple,green_apple} and that 
future work is needed to make such systems transparent and accountable.
Our work is different, as we target client-side notification and not platform
notification.

Another concern is that even client-side notification ends up a stepping stone
towards riskier backdoor/censorship mechanisms, because once the former is
deployed it will be easier to deploy, or justify deploying, the latter.
Client-side functionality at least provides the opportunity for activists and
others to detect changes to client-side software and understand their effects,
adding some transparency and accountability.  At the same time, arguments for,
or against, various anti-abuse mechanisms would do well to delineate between
approaches that empower users to understand and control their online experience
(warnings, the ability to select users/content to block) and that disempower
users (client-side or platform-side automatic censorship).  We
believe our techniques will be useful for the former, without intrinsically
promoting the latter.

\paragraph{Client-side similarity testing and privacy.}
\label{sec:deployment}
As mentioned above, we focus on private image similarity testing services. These
allow a client,
who receives some value $\image$ on an E2E encrypted platform,
to submit a request to a service provider holding a database $\imageset$; the response indicates to the client whether
$\image$ is
similar to any item in $\imageset$. As the database $\imageset$ may be quite
large, we need scalable solutions.
The service provider could be the messaging platform,
or a third party service.
In the case when the provider is a third party service,
the protocol runs between the client and the testing service, without
involvement of platform servers.

A key concern will be the privacy risk imposed on clients by a testing service.
Our threat model consists of an adversary in control of the service's servers,
who wants to learn information about a client's image~$\image$ by inspecting
messages sent to the service in the course of similarity testing. This is often
referred to as a semi-honest adversary, though our approaches will
meaningfully resist some types of malicious adversaries that deviate
from the prescribed protocol.
In terms of privacy threat, we primarily focus on what we call a matching attack, in which the
adversary wishes to accurately check whether $\image$ matches some adversarially
chosen image (see \secref{sec:privacy-goal} for a formalization).  A matching
attack enables, for example, adversarial service operators to monitor whether 
clients received any image on an adversarially chosen watchlist.  

In this initial work we primarily focus on the risks against a single query from
the client, and explicitly do not consider adversaries that just
want to recover partial plaintext information, such as if the adversary wants to
infer if an image contains a person or not. While we believe our results also
improve privacy for such attacks, we do not offer evidence either way and future
work will be needed to explore such threats. 
We also do not consider
misbehaving servers that seek to undermine correctness, e.g., by modifying
$\imageset$ to force clients to erroneously flag innocuous images.  How to build
accountability mechanisms for this setting is an interesting open question.
We simulate the scenarios of adversaries that somehow can take
advantage of known correlations between queried images in \apref{ap:repeated} 
and propose potential mitigation solutions in \secref{sec:limitations}.

Nevertheless there are already several challenges facing developing a service
that prevents accurate matching attacks in our setting.
While prior work has established
practical protocols for private set
membership~\cite{thomas2019protecting,li2019protocols}, these only provide exact
equality checks.  Even small manipulations such as image re-sampling, minor
cropping, or format conversion make exact matching schemes fail.  Second,  the
database $\imageset$ can be arbitrarily large and may require frequent update.
For instance, the published dataset from Twitter with activities of accounts
associated to the
Russian Internet Research Agency consist of 2 million images in
total~\cite{twitter_ira}.

\paragraph{Existing approaches.}
We review deployed systems and suggested
designs for image similarity testing.

\textbf{\emph{Plaintext services.}} Most current deployments have the client 
upload their image to a third party service. 
A prominent example is the PhotoDNA
service.
After a client submits an image to the service, it immediately hashes the image using a 
proprietary algorithm~\cite{photodna}. Importantly, the hash can be compared to other
hashes of images in a way that measures similarity of the original images. Such
hashes are often called similarity hashes~\cite{oliva2001modeling,chum2008near} or perceptual hashes~\cite{zauner2010implementation}. (We show 
examples later.)
The original image that was sent to the service is deleted after hashing.
This plaintext design has various benefits, including simplicity for clients
and the ability to hide the details of the hashing used. The
latter is important in contexts where malicious users attempt to
modify an image $\image \in \imageset$  in the service's bad list 
to create an image $\image'$ that will not be found as similar to any image in
$\imageset$ (including $\image$)~\cite{xiao2019seeing}. 

Another example of a plaintext service is WhatsApp's in-app reverse search
function to combat visual misinformation~\cite{whatsappsearch},
rolled out in June 2020.  This feature
allows users to submit their images to Google reverse image search for the
source or context of a specific image.  In this case, the user needs to reveal
their image to both Google and 
WhatsApp, 
sacrificing user privacy.

\textbf{\emph{Hashing-based services.}}
For privacy-aware clients,
revealing plaintext images represents a significant privacy risk. 
An alternative approach is to use a public hashing algorithm,
have the client first hash their image,
and submit only the resulting representation to the similarity checking service.
While this requires making the hashing algorithm available to clients (and,
potentially, adversarial users), it improves privacy because the original images are 
not revealed to the service. It also improves performance: hashes can be
compact (e.g., 256 bits) and compared against a large database
$\imageset$ in sublinear time~\cite{norouzi2012fast}. This approach is used by
Facebook's ThreatExchange~\cite{threatexchange} service that allows
organizations to share hashes of images across trust boundaries. 
They use a custom similarity
hash called PDQHash~\cite{pdqhash}.

Sharing hashes, however, still has privacy risk. 
For example, 
although the lossy process of PDQHash generation makes recovering the exact input impossible in general,
revealing the hash allows inferring whether a queried value
is similar to another image.
An adversary at the service provider's side may brute-force search a database of 
images to find ones close to the queried value. 

\textbf{\emph{Cryptography-based services.}} An alternative approach that preserves
privacy is to employ a secure 2PC protocol~\cite{yao1986generate}
between the client and service. 
Existing 2PC protocols for similarity matching
(e.g.,~\cite{jarrous2013secure,asharov2018privacy,chen2020sanns}) can,  
in the best case, ensure that no information about the client's image is
leaked to the server and that nothing about $\imageset$ (beyond whether it
contains a similar image) leaks to the client.
However, existing 2PC protocols do not efficiently scale to large 
databases~$\imageset$.

Recent work by Kulshrestha and Mayer~\cite{kulshrestha2021identifying} proposed private approximate membership computation~(PAMC)
to allow similarity testing of images encoded as PDQHashes.
The protocol begins by splitting the database $\imageset$ into buckets.
Using private information retrieval, a client retrieves a bucket from the server with the bucket identifier generated from the PDQHash of their image.
The chosen bucket is not disclosed to the server.
The two parties then perform a private similarity test 
to determine whether the client PDQHash has sufficiently small Hamming
distance to any image in the bucket.
The protocol is still rather expensive, with their initial experiments requiring $37.2$ seconds for a one-time set up 
and $27.5$ seconds for a query for a block list with the size of $2^{20}$. These
times exclude network delays (measurements were performed with client and server
on the same workstation).
While a step closer to practicality, this remains prohibitive particularly since
we expect that performance in deployment would be worse for lightweight client
hardware such as mobile phones. 

Concurrent work by Apple~\cite{apple_csam} proposed a protocol that encodes images from a user's cloud storage into perceptual hashes.
The perceptual hashing algorithm maps similar images into identical hashes with high probability.
The protocol then performs private set intersection between the encoded hashes and a database of known CSAM images.
The private contents are revealed to the platform only when the number of matches exceeds a certain threshold.
Whether such a protocol, designed for CSAM detection, is fit for client-side detection remains a question for future work to explore.

In summary, all three existing design approaches for image similarity 
testing ---
revealing images as client requests,
using similarity representations like PDQHash as client requests,
and employing secure 2PC protocols --- do not provide satisfying solutions. 
The first two designs do not provide sufficient
privacy, while 2PC designs are currently not sufficiently efficient.
Thus we need a new approach to similarity testing.

\begin{figure}[t]
{\small
\setlength\tabcolsep{.5pt}
\center\footnotesize
\begin{tikzpicture}[yscale=-1,node distance=0.5cm]
    \coordinate(topleft) at (0.0,0.0);
    \coordinate(clientcoord)  at (1.0,0.10);

    \draw (topleft) rectangle ++(8,3); 

    \node(client)[align=center,minimum width=3,anchor=north] at (clientcoord) {\textbf{Client}($\image$)};
    \node(clientline) [below of=client,align=center,minimum width=3] {$\embimage\xleftarrow{\$} \Embed(\image)$};

    \draw[thick,->] (3,0.9) -- node [text width=3cm,midway,above,align=center] {$\embimage$} (5,0.9);

    \node(server)[align=center,minimum width=3,anchor=north] at ($(clientcoord)+(6,0)$) {\textbf{Server}($\imageset$)};
    \node(serverline)[below of=server,align=center,minimum width=3] {$\bucketset  \gets \Sim(\embimage,\imageset)$};
    \node(protrect)[draw,rectangle,dashed,rounded corners=3,minimum width=3.5cm,minimum height=1.5cm, anchor=north west] at (2.25,1.3) {};
    \node[align=center,anchor=north] at (4,1.3) {Similarity protocol};
    \draw[thick,<-] (3,2.0) -- (5,2.0);
    \draw[thick,->] (3,2.2) -- (5,2.2);
    \draw[thick,<-] (3,2.4) -- (5,2.4);

    \node(inputw)[align=center,minimum width=1cm, anchor=north] at (1,1.5) {$\image$};
    \node(inputb)[align=center,minimum width=1cm, anchor=north] at (7,1.5) {$\bucketset$};
    \draw[thick,->] (inputw) |- (protrect);
    \draw[thick,->] (inputb) |- (protrect);

\end{tikzpicture}
\caption{\label{fig:sbb-overview} Two-stage framework for using similarity-based bucketization to determine if an image $\image$ is similar to one in an image dataset $\imageset$.
}
}
\end{figure}

\section{Similarity-Based Bucketization}
\label{sec:sbb}

To enable efficient, privacy-preserving client-side similarity testing, 
we take inspiration from previous work that
used bucketization to support efficient private set membership testing~\cite{thomas2019protecting,li2019protocols}.
These will, however, not work for image similarity testing.
We therefore introduce a new two-step framework, as shown in \figref{fig:sbb-overview}.
It first enables scaling by utilizing what we call similarity-based
bucketization (SBB) to gather a subset $\bucketset \subseteq \imageset$, called a
bucket, of possibly relevant images. 
The second step is to perform a similarity testing
protocol over the bucket; we explore how SBB can provide scaling improvements
for several different similarity testing protocols.
In this section we introduce coarse embeddings, which allow crude
similarity comparisons, rather than the granular ones that regular similarity
hashes provide.
A summary of the notation we use throughout this paper
appears in \tabref{tab:notation}.  For simplicity, we refer to the similarity
testing server as the server.

\paragraph{Formalizing SBB.}
We formalize embeddings first. 
A similarity embedding method is a function $\simhash\Colon \imagespace\rightarrow\bits^\len$ for the space of images~$\imagespace$ and where $\len$ is a parameterizable length.
Some embeddings map images to $\R^\len$ (or a suitably granular approximation
of it), but we focus on bit strings unless explicitly mentioned
otherwise. Associated with $\simhash$ is a distance measure 
$\Delta\Colon \bits^\len \times \bits^\len \to \Z$.
We focus on
$\Delta$ being Hamming distance. 
Most often one sets an application-dependent threshold $\threshold$ as the definition of similarity,
and builds $\simhash$ so that matches with distance values smaller than $\threshold$ indicate the images
depicted are perceptually similar. 

An example $\simhash$ is the aforementioned PDQHash~\cite{pdqhash}.
PDQHash was designed to capture the ``syntactic'' similarity between images.
Syntactic similarity captures if two images are essentially the same,
e.g., the same image but of different quality, or rotated slightly.
This is different from semantic similarity,
which focuses on whether images share the same features, e.g., the same person.
Algorithms designed for syntactic similarity also include PhotoDNA~\cite{photodna} and pHash~\cite{zauner2010implementation}.
PDQHash first converts a given image $\image$ from RGB to luminance, then uses two-pass Jarosz filters to compute a weighted average of $64\times64$ subblocks of the luminance image.
Given the $64\times64$ downsample,
the algorithm computes a two-dimenstional discrete cosine transform~(DCT),
and keeps only the first $16$ slots in both X and Y directions.
After that, each entry of the $16\times16$ DCT output is transformed into a binary bit after being compared to the median, with $1$ indicating larger than the median and $0$ indicating otherwise.

\begin{table}[t]
\centering
\footnotesize
\begin{tabular}[t]{lp{0.35\textwidth}}
\toprule
Symbol & Description\\
\midrule
$\image$ / $\imagespace$ & an image / set of all images \\
$\imageset$ & set of images held by server\\
$\dist$ & distribution from which images are sampled\\
\midrule
$\len$ & length of similarity embedding \\
$\simhash$ & similarity embedding method \\
$\simimage$ / $\metricspace$ & result of similarity embedding / set of all such results\\
$\Delta$ & distance function between two similarity embeddings\\
$\threshold$ & distance threshold for similarity matching using $\Delta$\\
$\dist_{\simhash}$ & distribution of similarity embeddings induced by $\simhash$ \\
\midrule
$\Embed$ & coarse embedding algorithm\\
$\Sim$ & coarse embedding similarity algorithm\\
$\embimage$ & an output of $\Embed$\\
$\emblen$ & length of $\embimage$\\
$\bucketset$ & candidate bucket generated from $\Sim$\\
\midrule
$\flipbias$ & parameter of $\EmbedScheme_{LSH}$, flipping bias\\
$\coarsethreshold$ & coarse threshold of $\Sim$\\
\bottomrule
\end{tabular}
\caption{\label{tab:notation} Notation frequently used in this paper.}
\end{table}

\textbf{Coarse embedding schemes.}
To allow bucketization via similarity,
we define a coarse embedding scheme $\EmbedScheme = (\Embed,\Sim,(\imagespace,\distimage))$,
as a pair of algorithms and
an associated metric space. 
The (possibly randomized) embedding algorithm $\Embed(\image)$ takes as input a
value $\image \in \imagespace$ and outputs a value $\embimage \in \bits^\emblen$.
Here $\emblen$ is a configurable parameter.  
We call $\embimage$ the embedding of $\image$, or simply the embedding when
$\image$ is clear from context.  The deterministic algorithm
$\Sim(\embimage,\image')$ takes as input $\embimage \in \bits^\emblen$ and
$\image' \in \imagespace$ and outputs a bit. The bit being one 
indicates that the embedding of $\image'$ is similar to the embedding of $\image$,
which is denoted as $\embimage$.  It will be convenient to abuse notation, by letting 
$\Sim(\embimage,\imageset)$  be defined to output the 
set $\{\image' \mid \image' \in \imageset \land \Sim(\embimage,\image') = 1\}$.

One idea for a coarse embedding scheme would be to simply use $\simhash$ directly,
but with smaller 
$\len$ and smaller $\threshold$. 
To be specific,
using PDQHash as an example,
a coarse PDQHash scheme $\EmbedScheme_{cPDQ} = (\Embed_{cPDQ},\Sim_{cPDQ},(\imagespace,\distimage))$ can be implemented as follows:
$\Embed_{cPDQ}(\image)$ computes the hash of $\image$ on the first $4\times4$ slots of the DCT output, rather than $16\times16$ of the output,
producing a $16$-bit binary string.
The $16$-bit value can then provide much cruder similarity comparison.
Then $\Sim_{cPDQ}(\embimage,\imageset)$ iterates over all $\image'
\in \imageset$, hashes them,
and returns those with distance smaller than a coarse threshold~$\coarsethreshold$ as a bucket $\bucketset$.
Unfortunately this scheme doesn't meet our privacy
goals,
as we will explore in detail in \secref{sec:expr}.

\textbf{Correctness and compression efficiency.}
We define $\distimage$ via an existing similarity embedding $\embed$,
i.e., $\distimage(\image,\image')  = \Delta(\embed(\image),\embed(\image'))$.  
We say that a coarse embedding scheme is $(\threshold,\epsilon, \dist)$-correct 
if,
for an image $\image$ sampled from $\dist$, a distribution over $\imagespace$, and for any $\image'$ such that
$\distimage(\image,\image') < \threshold$, we have that
$\Prob{\Sim(\Embed(\image), \image')=1} \geq 1 - \epsilon$,
where the probability is taken over the random coins used by $\Embed$ and the choice of $\image$ from $\dist$.
A trivial coarse embedding scheme is to just use $\embed$ itself, which would
be $(\threshold,0, \dist)$-correct for any $\threshold$ and $\dist$. But as
mentioned, doing this will not provide the desired privacy. 

Another type of trivial coarse embedding scheme is to have~$\Sim$
always output one. Then $\Embed$ could output a fixed constant value regardless
of input,
meaning nothing leaks about $\image$. 
This would also be $(\threshold,0,\dist)$ correct for arbitrary $\threshold$ and any given $\dist$, but 
won't be useful because, in our SBB application, the bucket would end up being
the entire set $\imageset$.

We define a compression efficiency metric as follows.
A coarse embedding scheme is
$(\imageset,\alpha,\dist)$-compressing if for a distribution $\dist$ over $\imagespace$,
$\imageset \subseteq \imagespace$, $\image$ drawn from $\dist$ we have that
  $\Ex{\frac{|\bucketset|}{|\imageset|}} \leq \alpha$
where $\bucketset = \{\image' \mid \image' \in \imageset,  \Sim(\Embed(\image),\image')=1\}$ and the probability space is over the choice of $\image$ from $\dist$ and the coins used by $\Embed$.
This measures the average ratio of bucket size to dataset size.

\paragraph{LSH-based coarse embedding.}
\label{sec:algo-detail}
We propose a coarse embedding scheme that is based on locality sensitive hashing~(LSH)~\cite{gionis1999similarity}.
An LSH function family allows approximate nearest neighbour search with high-dimensional data.
Formally,
the scheme $\EmbedScheme_{LSH}=(\Embed_{LSH},\Sim_{LSH},(\imagespace,\distimage))$ is defined as follows~(see also \figref{fig:sbb-protocol}).
Let $\indexfuncset$ be a family of hash functions that maps points from a high-dimensional input space $\inputspace$ into a hash universe $\hashspace$ of lower dimension.
When $\inputspace=\bits^{\len}$ and $\Delta$ is Hamming distance, 
the construction of an LSH function family is intuitive.
For an $\len$-bit string $\simimage$,
we denote the individual bits as
$\simimage_1,\ldots,\simimage_\len$.
An indexing function is a map
$\indexfunc\Colon\bits^\len\to\bits$ and we let $\indexfuncset$ be the
set of all index functions,
which is the LSH function family.

In our context,
we randomize the selection of LSH functions for every individual query,
and add noise to ensure privacy.
$\Embed_{LSH}$ takes an image 
$\image$ as input, 
and computes the similarity embedding of it via $\simimage \gets \embed(\image)$.
In our implementation,
we use PDQHash for $\embed$.
Our protocol works on other types of embedding functions that use Hamming distance as a metric,
such as pHash~\cite{zauner2010implementation}.
We sample 
$\emblen$ bits from $\simimage$ by sampling $\emblen$ LSH functions without replacement
and flip each bit with probability~$\flipbias$.
The resulting embedding with added noise and the indices are shared with the server.
The server performs $\Sim_{LSH}$ by
comparing the received bits to the corresponding bits of $\embed(\image_i)$
for each $\image_i \in \imageset$, adding $\image_i$ to the bucket $\bucketset$
if sufficiently many of these bits match.

To formalize this we abuse notation slightly.
We denote $\indexfunc$ as a map $\bits^\len \to \bits^\emblen$,
a combination of $\emblen$ functions sampled uniformly from $\indexfuncset$ without replacement. 
Similarly one can easily encode an indexing function as a set of indexes; we treat $\indexfunc$ both as a function and its encoding.
We let $\flip_\flipbias$ be the randomized algorithm that takes as input
a bit string $\embimage$ and outputs $\pnoise$ of the same length, setting $\pnoise_i = \embimage_i$ with probability
$1 - \flipbias$ and $\pnoise_i = \lnot p_i$ with probability $\flipbias$. 
The full algorithms for $\EmbedScheme_{LSH}$ are shown in
\figref{fig:sbb-protocol}.
We use a threshold $\coarsethreshold$ for
the Hamming distance over the randomly selected indexes.
We formally analyze correctness of $\EmbedScheme_{LSH}$ in \apref{sec:correctness}.
Different choices of the parameter sets of $\EmbedScheme_{LSH}$, i.e., embedding length $\emblen$, flipping bias $\flipbias$, and coarse thresholds $\coarsethreshold$
result in different combination of privacy loss,
correctness, and bucket compression rate.
We explore this trade-off in \secref{sec:expr}.

\begin{figure}[t]
  \centering
  \fpage{.45}{
    \hpagess{.35}{.48}{
    \underline{$\Embed_{LSH}(\image)$}\\[1pt]
    $\simimage \gets \embed(\image)$\\
    $\indexfunc \getsr \indexfuncset$\\
    $\embimage \getsr \flip_\flipbias(\indexfunc(\simimage))$\\
    Return $(\indexfunc,\embimage)$
  }{
    \underline{$\Sim_{LSH}((\indexfunc,\embimage),\imageset)$}\\[1pt]
    $\bucketset \gets \{\}$\\
    For $\image \in \imageset$:\\
    \myInd If $\Delta(\embimage,\indexfunc(\embed(\image))) \le \coarsethreshold$ then\\
    \myInd\myInd $\bucketset \gets \bucketset \cup \{\image\}$\\
    Return $\bucketset$
  }
}
  \caption{Coarse embedding scheme $\EmbedScheme_{LSH}$.}
  \vspace{-0.3cm}
  \label{fig:sbb-protocol}
\end{figure}

One limitation of $\EmbedScheme_{LSH}$ arises should an adversary be able to
collect many queries that it knows are for the same image.  Eventually it will
see all bit locations, and even have enough samples to average out the noise
(e.g., via a majority vote for each bit location). We discuss this further in
\secref{sec:limitations}.

\paragraph{Similarity protocols.}
\label{sec:similarity_protocol}
A coarse embedding scheme will not suffice to perform a full similarity
check. Instead, we compose such a scheme to perform SBB with a
similarity protocol where the server uses the resulting bucket $\bucketset
\getsr \Sim(\Embed(\image),\imageset)$.
The composition achieves privacy levels related to the protocol's for $\imageset$,
and correctness proportional to the product
of the coarse embedding and the protocol's correctness.
We discuss some examples and their properties here. 
These examples ensure perfect correctness,
hence the correctness of the composition depends solely on that of the coarse embedding.
In \secref{sec:end-to-end}, we show that 
for various similarity protocols,
both runtime efficiency and bandwidth are largely improved when combined with SBB.

\textbf{\emph{Similarity embedding retrieval.}} A pragmatic similarity protocol has the server
send to the client the similarity embeddings of all the elements in the bucket,
i.e., send $\simimage_1,\ldots,\simimage_{|\bucketset|}$ where $\simimage_i\
= \embed(\image_i)$ for each $\image_i \in \bucketset$. The client can then
compute $\embed(\image)$ and compare against each $\simimage_i$.  
This approach reduces the
confidentiality for the server's dataset, since now clients learn all the similarity embeddings 
in $\imageset$ that fall into the bucket.  
It may also reduce
resistance to evasion attacks,
but in contexts where client privacy is paramount
this simple protocol already improves on existing approaches.

\textbf{\emph{Secure-sketch similarity protocol.}} We can improve server confidentiality via a
secure-sketch-based~\cite{dodis2004fuzzy} similarity protocol.
The protocol ensures that the client can only learn the similarity hashes that are
close to a client-known value. 
If images in the database have
sufficiently high min-entropy then the secure sketch ensures that the client
cannot learn it. This assumption may not always hold (most obviously in the case
that the client has a similar image), in which case confidentiality falls back
to that achieved by similarity embedding retrieval.
We defer details and formalization to \apref{sec:sssp}.

\textbf{\emph{2PC similarity protocols.}} Finally, one may compose SBB with an
appropriate 2PC for similarity comparisons.  Such an approach provides better
confidentiality for $\imageset$, but at the cost of larger bandwidth and
execution time.  We experiment with two frameworks: CrypTen~\cite{crypten2020}
and EMP~\cite{emp-toolkit}.  CrypTen is a secret-sharing-based semi-honest MPC framework for
Python that is geared toward machine learning applications.  CrypTen currently relies on a trusted third party
for certain operations, including generating Beaver multiplication
triples~\cite{beaver1991efficient}.  Generation of Beaver triples using
Pailler~\cite{paillier1999public} is actively under development.  EMP is a
circuit-garbling-based generic semi-honest 2PC framework that is implemented in
C++.

Both frameworks above target semi-honest security. One could also compose
SBB with a maliciously secure 2PC protocol, with caveat that a malicious server
is not bound to correctly execute the SBB $\Sim$ algorithm and so could deviate
by adding arbitrary values to the bucket. In our context, such an attack can
anyway be performed by just modifying $\imageset$ in the first place, but this
could be relevant in future work, particularly as it relates to accountability
mechanisms that monitor for changes to~$\imageset$.

\section{Privacy of Coarse Embeddings}  
\label{sec:privacy-goal}

In this section, we detail our framework for reasoning about privacy threats against 
coarse embeddings.
Our framework is 
designed to analyze the adversary's confidence in assessing a predicate being true or not, when given one or multiple client requests as input.
Here we only consider client privacy; 
privacy of the server's dataset can be achieved by
composing SBB with a suitable similarity protocol (see
\secref{sec:similarity_protocol}).

\paragraph{Proposed security measures.} 
We consider settings where an adversary receives the embedding(s) of one or more
images, and wants to infer some predicate over the images. Let $\imagespace^q$
be the Cartesian product of $q$ copies of $\imagespace$. We denote tuples of
images in bold, $\vecimage \in \imagespace^q$ and $\vecimage[i] \in
\imagespace$ for $i \in [1,q]$. 
Let $\Embed(\vecimage)$ be the result of running $\Embed$
independently on each component of $\vecimage$, denoted as $\vecembimage$. That is, $\vecembimage \getsr \Embed(\vecimage)$ is
shorthand for $\vecembimage[i] \getsr \Embed(\vecimage[i])$ for $i \in [1,q]$.

To start, consider a distribution $\dist$ over $\imagespace^q$ and a predicate
$f\Colon \imagespace^q\rightarrow\{\false,\true\}$.
We want to understand the ability of
an adversary to infer $f(\vecimage)$ when given $\Embed(\vecimage)$ for
$\vecimage$ drawn from $\imagespace^q$
according to $\dist$. As an example, let $q=1$ and have~$f$ 
indicate whether a client image has the same perceptual hash value with that of another image that is chosen by the adversary. 
We'd like to have a guarantee that revealing $\Embed(\vecimage)$ doesn't allow
inferring that the images are similar with high confidence. 
We refer to a tuple $\setting = (\dist,\imagespace^q,f)$ as a privacy setting. 

We provide three measures of adversarial
success: accuracy, precision, and area under the receiver-operator curve (AUC),
thereby adapting traditional
measures of efficacy for prediction tasks to our adversarial setting.

\textbf{Accuracy.}
Let $\advAacc$ be a randomized
algorithm, called an accuracy adversary. We define a probabilistic experiment that
tasks $\advA$ with inferring $f(\vecimage)$ given $\Embed(\vecimage)$ for
$\vecimage$ drawn according to $\dist$. This probability space is over the coins
used to sample $\vecimage$, to run $\Embed$ a total of~$q$ times, and to run
$\advAacc$. 
We let ``$\advAacc(\Embed(\vecimage)) = f(\vecimage)$'' be the event that
$\advAacc$ outputs the correct value of the predicate. 
We write this as a pseudocode game
$\PRED_{\EmbedScheme,\setting}$ shown in \figref{fig:prec-security-game}, where
the returned value captures the event that $\advA$ succeeds.  
For skewed distributions, the trivial adversary that ignores its input and simply predicts the most likely predicate value may achieve high accuracy. We therefore define the advantage of $\advAacc$ as the improvement over that trivial approach:
\begin{align*}
  \epsilonAcc &= \frac{\Prob{\advAacc(\Embed(\vecimage)) = f(\vecimage)} - \baseline}{1- \baseline}\;,
\end{align*}
where $\baseline = max(\Prob{f(\vecimage)=1}, \Prob{f(\vecimage)=0})$.

\textbf{Precision.} For adversaries that are mainly interested in inferring positive instances,
$f(\vecimage)=\true$,
accuracy may appear misleading in cases with high skew,
i.e.,  when $f(\vecimage) = \false$ happens almost always~\cite{manning1999foundations}.
In our running
example, we expect that in practice most images handled by clients will be
distinct from the adversary-chosen one.

\begin{figure}[t]
  \centering
  \hfpagess{.14}{.14}{
    \underline{$\PRED_{\Embed,\setting}$}\\[1pt]
    $\vecimage \getdist{\dist} \imagespace^q$\\
    $\embimage \getsr \Embed(\image)$\\
    $b \getsr \advA(\embimage)$\\
    Return $(b = f(\vecimage))$
    }{
    \underline{$\AUC_{\Embed,\dist}$}\\[1pt]
    for $i \in \{\true,\false\}$\\ 
    \myInd $\vecimage_i \getdist{\dist_i} \imagespace^q$\\
    \myInd $\vecembimage_i \getsr \Embed(\vecimage_i)$\\
    \myInd $r_i \getsr \advAauc(\vecembimage_i)$\\
    Return $r_{\true} > r_{\false}$
  }
  \caption{Pseudocode games for measuring embedding privacy. Here $f$ represents
  a predicate.}
  \label{fig:prec-security-game}
  \vspace{-0.3cm}
\end{figure}

We therefore also provide two other security measures. 
First, we measure the precision of a non-trivial adversary in inferring $f(\vecimage)$.
By non-trivial,
we mean that the adversary has to predict $f(\vecimage)=\true$ at least once.
We use the
same probability space as in the previous definition. 
To emphasize that the
best adversary for achieving high precision may differ from the best one for
maximizing accuracy improvement,  we use $\advAprec$ to denote the adversary
when considering precision.
We want to measure the probability that $\advAprec$ succeeds, conditioned on
$\advAprec$ outputting
$\true$. 
We denote this by $\CondProb{f(\vecimage) =
\true}{\advAprec(\Embed(\vecimage))=\true}$.
To prevent $\advAprec$ from using the trivial strategy of predicting all events as negative, 
we define an affiliate concept of recall $\epsilonRecall$ as
\bnm
  \epsilonRecall = \CondProb{\advAprec(\Embed(\vecimage))=\true} {f(\vecimage) = \true}\;.
\enm
We will restrict attention to adversaries $\advAprec$ for which $\epsilonRecall$
exceeds some threshold, e.g., $\epsilonRecall > 0\%$. 
We let
\bnm
  \epsilonPrecR{\epsilonRecall>\epsilonRecallThresh} = \CondProb{f(\vecimage) = \true}{\advAprec(\Embed(\vecimage))=\true} 
\enm
denote the precision advantage for some adversary $\advAprec$ that achieves
$\epsilonRecall > \epsilonRecallThresh$,
with the exception of $\epsilonRecallThresh=100\%$,
where the restriction is set as $\epsilonRecall=100\%$.

\textbf{AUC.}
Precision captures the adversary's confidence in predicting the positive class,
i.e., the likelihood of $f(\vecimage)$ being $\true$ when the adversary predicts it to be true.
However, it does not capture the adversary's confidence regarding predicting the negative class.
We therefore
finally formalize a notion of AUC,
where recall that AUC is the area
under the receiver-operator curve, a popular measure of classifier efficacy.
At a high level, AUC-ROC indicates the classifier's capability in differentiating positive classes from negative ones.
For a setting $\setting = (\dist,\imageset^q,f)$, let $\dist_i$ be the
distribution $\dist$ over $\imagespace^q$ conditioned on $f(\vecimage) = i$ for
$i \in \{\true,\false\}$.  Then for an adversary $\advAauc$ that outputs a real value in
$[0,1]$ we measure the probability that $\advAauc(\Embed(\vecimage_{\true})) >
\advAauc(\Embed(\vecimage_{\false}))$ where $\vecimage_i$ is drawn from $\imageset^q$
according to $\dist_i$. The probability is over the independent choices of
$\vecimage_{\true}$ and $\vecimage_{\false}$, as well as the coins used by the $2q$
executions of $\Embed$ and two executions of $\advAauc$.
We provide a pseudocode game $\AUC_{\Embed,\setting}$ describing this
probability space in \figref{fig:prec-security-game}.
Then we define the advantage of an AUC adversary $\advAauc$ by
\bnm
   \epsilonAuc =  2\cdot\Prob{\advAauc(\Embed(\vecimage_{\true})) >
   \advAauc(\Embed(\vecimage_{\false}))} - 1 \;.
\enm
This formulation uses a well-known fact~\cite{cortes2004auc,agarwal2005generalization}
about AUC that it is equal to the probability that a scoring algorithm (in our
case, the adversary) ranks positive-class instances higher than negative-class
instances. For simplicity, we ignore ties ($\advAauc$ outputting the same value
in each case). Without loss of generality, we can assume that the AUC adversary $\advAauc$ wins
the game with probability greater than or equal to 0.5, and so the normalization
maps to the range $[0,1]$. (This corresponds to the classic Gini coefficient.)

\textbf{Possible predicates.}
We focus on the matching predicate in our analyses.
An adversary chooses an image $\image_{adv}$,
and wishes to determine if the client request $\Embed(\image_{c})$ corresponds to an image that is very similar to $\image_{adv}$,
i.e., $\embed(\image_c)=\embed(\image_{adv})$.
Had the adversary the confidence to assert that there's a match,
they can recover the content from the submitted request.
Such an attack is trivial in hashing-based similarity testing services,
when the clients are required to submit similarity hashes as their requests.
That said, our framework can be used to analyse other predicates.
For example, an adversary may want to infer if $q$ different client requests all correspond to similar content.
We leave such analyses to future work.

\textbf{Discussion.}
We have omitted placing computational limits on adversaries, which would be useful in cases where embedding schemes rely on cryptographic tools --- our mechanisms do not. A computational treatment is a straightforward extension to our framework.
The security games underlying our measures are conservative, and in particular
we assume that the adversary has perfect knowledge of the distribution $\dist$ as well as $\dist$'s 
support,
which is unlikely in practice.
While we do not explicitly model side information that an adversary might have about a client's image,
it is possible to include it indirectly in this framework,
for example,
by changing the distribution or modifying the privacy predicate.

\paragraph{Bayes optimal adversaries.}
\label{sec:optimal}
To allow simulations that evaluate privacy, we 
focus on adversaries that maximize advantage. 
Recall that we assume that the adversary knows the distribution $\dist$
from which the clients are sampling images for their requests.
Upon receiving client submitted requests $\vecembimage=\Embed(\vecimage)$,
the Bayes optimal adversary computes the \textbf{exact} likelihood of the predicate being $\true$ --- $\CondProb{f(\vecimage)=\true}{\Embed(\vecimage)=\vecembimage}$,
probabilities are over the choice of $\vecimage$ being sampled from $\imagespace^q$ and coins used by the executions of $\Embed$.

The Bayes optimal adversary for the precision metric, $\advAprec$,
chooses a threshold $\probThreshold$,
such that $\advAprec(\vecimage)=\true$
if and only if $\CondProb{f(\vecimage)=\true}{\Embed(\vecimage)=\vecembimage} > \probThreshold$.
The adversary may choose $\probThreshold$ to maximize
$\epsilonPrecR{\epsilonRecall>\epsilonRecallThresh}$.
A similar strategy can be used by $\advAacc$.
However when $f(\vecimage)=\true$ is especially rare,
the adversary may achieve larger $\epsilonAcc$ by predicting all predicates using the majority class, $f(\vecimage)=\false$.
When doing so, the optimal $\epsilonAcc$ is zero.
Note that in our simulations we consider all possible threshold values for the sampled dataset, and report on the one that provides the best success rate. 
A real attacker would have to pick a threshold a priori, meaning our analyses are conservative.

The Bayes optimal adversary for $\advAauc$ doesn't have to choose a threshold~$\probThreshold$.
The adversary is given two scenarios to rank: $\vecimage_{\true}$ and $\vecimage_{\false}$,
one has $f(\vecimage_{\true})=\true$ and the other has $f(\vecimage_{\false})=\false$.
The adversary wins the game when they correctly rank the $\true$ scenario over the $\false$ one,
i.e., when $\advAauc(\vecembimage_{\true}) > \advAauc(\vecembimage_{\false})$.
As $\Embed(\vecembimage)$ is the only information that the adversary gained from our SBB protocol,
the optimal strategy to utilize the information is hence to use $\CondProb{f(\vecimage)=\true}{\Embed(\vecimage)=\vecembimage}$ as $\advAauc(\vecembimage)$.
\section{Balancing Security, Correctness, Efficiency}
\label{sec:expr}

In this section, we demonstrate how to balance security, correctness and
compression efficiency of SBB when using the LSH-based coarse embedding scheme
$\EmbedScheme_{LSH}$. 
We do so via simulations using real-world image sharing data collected from social media sites.
Using our framework, 
we evaluate the security of $\EmbedScheme_{LSH}$ with varying parameter settings.
We then fix the security requirement and explore the trade-off between
correctness and compression efficiency.

\subsection{Experimental Setup}
\label{sec:data}

\paragraph{Data collection.}
Recall that our deployment scenario in \secref{sec:deployment}, an ideal dataset
should represent the image sharing behaviors among users on an end-to-end encrypted messaging platform.
However, data of one-to-one shares among users on any private messaging platform is by definition, private.
Hence, we sought a public dataset that may act as a stand-in for our experiments.
As the dataset is publicly available, our experiment did not require review from
our IRB office.
On Twitter, users may retweet the content that they want to share with their audience.
We consider retweets on public Twitter as a proxy for user sharing in private social network.
Furthermore, the problem of misinformation, which motivated our study, is prevalent on Twitter~\cite{hindman2018disinformation}.

We were able to use a dataset from previous
work~\cite{hua2020characterizing,hua2020towards} that contains Twitter
interactions with a group of US political candidate accounts between September
13, 2018 and January~6, 2019.  The dataset includes 1,190,355 tweets with image
URLs, however, by the time of downloading images, only 485\,K images were
successfully retrieved.  We encode the retrieved images with PDQHash. There are
256,049 unique PDQHash values in total.  We collect the total number of retweets
of each tweet in November 2019: the data was available for $13\%$ of the tweets.
The total number of postings and retweets of the images we retrieved adds up to
1.2 million.
We simulate a workload for similarity testing as follows.  Any tweet and retweet
with a valid image is considered as a client request to our system.  Hence, this
experimental set has 1.2 million client requests with 256\,K unique PDQHash
values.

\begin{figure}
\begin{adjustbox}{width=0.48\textwidth}
\begin{tikzpicture}
\begin{axis}[
    xbar stacked,
	bar width=9pt,
	y=13pt,
	ymin=-0.2,
	ymax=3.2,
	xmin=0,
	xmax=1,
    enlarge y limits=0.15,
    legend style={at={(1.135,1)},
      anchor=north,legend columns=1, name=legend,font=\footnotesize},
    xlabel={Percentage of Similarity Embeddings},
    xlabel style={font=\footnotesize},
    ylabel style={font=\footnotesize},
    yticklabel style={font=\footnotesize},
    xticklabel style={font=\footnotesize},
    ytick={0, 1, 2, 3},
    yticklabels={{$\threshold=0$}, {$\threshold=32$},
    {$\threshold=64$}, {$\threshold=70$}},
    xticklabel={\pgfmathparse{\tick*100}\pgfmathprintnumber{\pgfmathresult}\%},
    ]

\addplot[fill=clr2_1!25, draw=white, xbar] plot coordinates {(0.0996,0) (0.0835,1) 
  (0.0805,2) (0.08,3)};
\addlegendentry{$1$}
\addplot[fill=clr2_1!50, draw=white, xbar] plot coordinates {(0.8401,0) (0.7480,1) 
  (0.7008,2) (0.6901,3)};
\addlegendentry{$(1, 10]$}
  
 \addplot[fill=clr2_1!75,draw=white, xbar] plot coordinates {(0.0577,0) (0.14,1)
  (0.1707,2) (0.1777,3)};
\addlegendentry{$(10, 100]$}
 \addplot[fill=clr2_1!100, draw=white, xbar] plot coordinates {(0.0026,0) (0.0285,1) 
  (0.048,2) (0.0523,3)};
\addlegendentry{$> 100$}

\end{axis}
\end{tikzpicture}
\end{adjustbox}
\caption{\label{fig:neighborhood} $\threshold$-Neighborhood size distribution for different $\threshold$.}
\vspace{-0.4cm}
\end{figure}

\textbf{Dataset statistics.}
Users on social media share similar images frequently.
The $\threshold$-neighborhood size of an image~$\image$ is the number of images that  are $\threshold$-similar to it.
Two images are $\threshold$-similar if and only if their similarity embeddings have a Hamming distance smaller than $\threshold$.
We choose the values of $\threshold$ according to the recommendations from the
white paper on PDQHash~\cite{pdqhash},
where 32 and 70 were specified as the lower and upper bounds of recommended similarity thresholds.
We also include $\threshold=0$ and $\threshold=64$ for comparison.

Figure~\ref{fig:neighborhood} shows the distribution of
$\threshold$-neighborhood sizes~(shades of color) of the images in client requests,
with different~$\threshold$~(in different rows).
The lightest shades~(left) are requests with neighborhood size of one, i.e., the neighborhood only contains the single image.
The following darker shades are the images with neighborhood size in the range $(1, 10]$, $(1,100]$, and $(100,\infty)$.
The bottom row shows the neighborhood size distribution with $\threshold=0$.
In our dataset,
most of the images in the client requests~($84\%$) share the same similarity embedding with more than one, but fewer than $10$ neighbors.
The distributions of neighborhood size are mostly similar to each other,
especially for $\threshold=64$ and $\threshold=70$.
Naturally, with a larger threshold, there are more requests containing images with a larger neighborhood size. 
For example, only $2.85\%$ of all request images have a neighborhood size larger than $100$ with $\threshold=32$~(third row from top, darkest column with purple),
while $5.23\%$ of the requests satisfy the same condition when $\threshold=70$~(first row from top, darkest column with purple).

\paragraph{Implementation.}
We compare the privacy of SBB
when using $\EmbedScheme_{LSH}$ with different embedding lengths~$\emblen$
to the baseline method $\EmbedScheme_{cPDQ}$.
We focus on the security guarantees against the matching attack and explain our implementation details.

\begin{figure*}
\centering
\begin{adjustbox}{width=\textwidth}
   \begin{tikzpicture}
     \begin{axis}[
            name=plot1,
    	    legend columns = 1,
    	    ticklabel style = {font=\footnotesize},
    	    ylabel={Adversarial advantage},
            ylabel style={font=\footnotesize, at={(axis description cs:0.08,.5)},anchor=south},
    	    xlabel style={font=\footnotesize},
    	    xlabel={Coarse embedding length $\emblen$},
    	    xtick = {8,9,10,11,12,13,14,15,16,17.5},
            xticklabels = {$8$,$9$,$10$,$11$,$12$,$13$,$14$,$15$,$16$},
            xticklabel style={align=center},
            yticklabel={\pgfmathparse{\tick*100}\pgfmathprintnumber{\pgfmathresult}\%},
    		height=4cm,
    		width=0.45\textwidth,
    		scaled y ticks=false,
    		legend style={/tikz/every even column/.append style={column sep=0.5cm}, at={(1.01,0.48)},anchor=south west,
    		font=\footnotesize},
    	]
        \addplot[clr3_1, only marks,error bars/.cd, y dir=both, y explicit] coordinates {
(8.0000, 0.9981) += (0, 0.0002) -= (0, 0.0002)
(9.0000, 0.9991) += (0, 0.0001) -= (0, 0.0001)
(10.0000, 0.9995) += (0, 0.0001) -= (0, 0.0001)
(11.0000, 0.9998) += (0, 0.0001) -= (0, 0.0001)
(12.0000, 0.9999) += (0, 0.0000) -= (0, 0.0000)
(13.0000, 0.9999) += (0, 0.0000) -= (0, 0.0000)
(14.0000, 1.0000) += (0, 0.0000) -= (0, 0.0000)
(15.0000, 1.0000) += (0, 0.0000) -= (0, 0.0000)
(16.0000, 1.0000) += (0, 0.0000) -= (0, 0.0000)
    	};  		
      \addplot[clr3_2, only marks,error bars/.cd, y dir=both, y explicit,] coordinates {
(8.0000, 0.3426) += (0, 0.0192) -= (0, 0.0192)
(9.0000, 0.5306) += (0, 0.0232) -= (0, 0.0232)
(10.0000, 0.6651) += (0, 0.0238) -= (0, 0.0238)
(11.0000, 0.8103) += (0, 0.0333) -= (0, 0.0333)
(12.0000, 0.8864) += (0, 0.0192) -= (0, 0.0192)
(13.0000, 0.9443) += (0, 0.0241) -= (0, 0.0241)
(14.0000, 0.9706) += (0, 0.0133) -= (0, 0.0133)
(15.0000, 0.9907) += (0, 0.0050) -= (0, 0.0050)
(16.0000, 0.9930) += (0, 0.0046) -= (0, 0.0046)
       };

    	\addplot[clr3_3, only marks,error bars/.cd, y dir=both, y explicit,] coordinates {
(8.0000, 0.0000) += (0, 0.0000) -= (0, 0.0000)
(9.0000, 0.1186) += (0, 0.0715) -= (0, 0.0715)
(10.0000, 0.4931) += (0, 0.0544) -= (0, 0.0544)
(11.0000, 0.7620) += (0, 0.0544) -= (0, 0.0544)
(12.0000, 0.8709) += (0, 0.0247) -= (0, 0.0247)
(13.0000, 0.9397) += (0, 0.0291) -= (0, 0.0291)
(14.0000, 0.9694) += (0, 0.0145) -= (0, 0.0145)
(15.0000, 0.9905) += (0, 0.0051) -= (0, 0.0051)
(16.0000, 0.9929) += (0, 0.0047) -= (0, 0.0047)
    	};
    	\addplot[clr3_1, only marks, mark=diamond*,mark options={scale=2}] coordinates {(17.5, 0.964756)};
    	\addplot[clr3_2, only marks, mark=diamond*,mark options={scale=2}] coordinates {(17.5, 0.965956)};
    	\addplot[clr3_3, only marks,mark=diamond*,mark options={scale=2}] coordinates {(17.5, 0.999980)};
    	\addplot[red, line width=1pt] coordinates {
    	(17, 0.85) (18, 0.85)};
    	\addplot[red, line width=1pt] coordinates {
    	(17, 0.85) (17, 1.15)};
    	\addplot[red, line width=1pt] coordinates {
    	(18, 0.85) (18, 1.15)};
    	\addplot[red, line width=1pt] coordinates {
    	(17, 1.15) (18, 1.15)};
    	\legend{$\epsilonAuc$, $\epsilonPrecR{\epsilonRecall=100\%}$, $\epsilonAcc$}
        \node[
            anchor=north west,
            align=left,
            font=\footnotesize
        ] at (axis cs:16.5,.85)
        {$\EmbedScheme_{cPDQ}$\\$\emblen=16$};
    \end{axis}

         \begin{axis}[
            name=plot2,
            at=(plot1.right of south east), anchor=left of south west,
    	    legend columns = 1,
    	    xshift=0.3cm,
    	    ticklabel style = {font=\footnotesize},
    	    legend style={nodes={scale=.95, font=\footnotesize}}, 
    	    xmin=-0.05,
    	    xmax=0.4,
	    ymin=-0.1,
	    ymax=0.75,
            xticklabel style={align=center},
            yticklabel={\pgfmathparse{\tick*100}\pgfmathprintnumber{\pgfmathresult}\%},
            ylabel style={font=\footnotesize, at={(axis description cs:0.1,.5)},anchor=south},
            xlabel style={font=\footnotesize},
            ylabel={$\epsilonPrecR{\epsilonRecall > \epsilonRecallThresh}$},
            xlabel={Flipping bias $\flipbias$},
    		height=4cm,
    		width=0.45\textwidth,
    		scaled y ticks=false,
    		legend style={/tikz/every even column/.append style={column sep=0.25cm}, at={(1.18,1)},anchor=north, font=\footnotesize}
    	]
   		
      \addplot[clr5_1,mark=*,line width=1pt,error bars/.cd, y dir=both, y explicit,] coordinates {
(0.0000, 0.5306) += (0, 0.0232) -= (0, 0.0232)
(0.0500, 0.4049) += (0, 0.0271) -= (0, 0.0271)
(0.1000, 0.2846) += (0, 0.0141) -= (0, 0.0141)
(0.1500, 0.1939) += (0, 0.0060) -= (0, 0.0060)
(0.2000, 0.1227) += (0, 0.0052) -= (0, 0.0052)
(0.2500, 0.0718) += (0, 0.0052) -= (0, 0.0052)
(0.3000, 0.0387) += (0, 0.0031) -= (0, 0.0031)
(0.3500, 0.0220) += (0, 0.0014) -= (0, 0.0014)

       };
       \addplot[clr5_2,mark=*,line width=1pt,error bars/.cd, y dir=both, y explicit,] coordinates {
(0.0000, 0.5306) += (0, 0.0232) -= (0, 0.0232)
(0.0500, 0.4049) += (0, 0.0271) -= (0, 0.0271)
(0.1000, 0.2846) += (0, 0.0141) -= (0, 0.0141)
(0.1500, 0.1270) += (0, 0.0043) -= (0, 0.0043)
(0.2000, 0.0531) += (0, 0.0016) -= (0, 0.0016)
(0.2500, 0.0320) += (0, 0.0017) -= (0, 0.0017)
(0.3000, 0.0139) += (0, 0.0007) -= (0, 0.0007)
(0.3500, 0.0083) += (0, 0.0002) -= (0, 0.0002)

       };
       \addplot[clr5_3,mark=*,line width=1pt,error bars/.cd, y dir=both, y explicit,] coordinates {
(0.0000, 0.5306) += (0, 0.0232) -= (0, 0.0232)
(0.0500, 0.4049) += (0, 0.0271) -= (0, 0.0271)
(0.1000, 0.1218) += (0, 0.0054) -= (0, 0.0054)
(0.1500, 0.0630) += (0, 0.0018) -= (0, 0.0018)
(0.2000, 0.0282) += (0, 0.0018) -= (0, 0.0018)
(0.2500, 0.0152) += (0, 0.0002) -= (0, 0.0002)
(0.3000, 0.0086) += (0, 0.0004) -= (0, 0.0004)
(0.3500, 0.0053) += (0, 0.0001) -= (0, 0.0001)
      };
  		
      \addplot[clr5_4,mark=*,line width=1pt,error bars/.cd, y dir=both, y explicit,] coordinates {
(0.0000, 0.5306) += (0, 0.0232) -= (0, 0.0232)
(0.0500, 0.1438) += (0, 0.0091) -= (0, 0.0091)
(0.1000, 0.0736) += (0, 0.0028) -= (0, 0.0028)
(0.1500, 0.0251) += (0, 0.0010) -= (0, 0.0010)
(0.2000, 0.0137) += (0, 0.0007) -= (0, 0.0007)
(0.2500, 0.0077) += (0, 0.0002) -= (0, 0.0002)
(0.3000, 0.0053) += (0, 0.0002) -= (0, 0.0002)
(0.3500, 0.0036) += (0, 0.0001) -= (0, 0.0001)
       };
       \addplot[clr5_5,mark=*,line width=1pt,error bars/.cd, y dir=both, y explicit,] coordinates {
(0.0000, 0.5306) += (0, 0.0232) -= (0, 0.0232)
(0.0500, 0.0067) += (0, 0.0010) -= (0, 0.0010)
(0.1000, 0.0031) += (0, 0.0003) -= (0, 0.0003)
(0.1500, 0.0023) += (0, 0.0002) -= (0, 0.0002)
(0.2000, 0.0021) += (0, 0.0001) -= (0, 0.0001)
(0.2500, 0.0021) += (0, 0.0000) -= (0, 0.0000)
(0.3000, 0.0020) += (0, 0.0000) -= (0, 0.0000)
(0.3500, 0.0018) += (0, 0.0005) -= (0, 0.0005)
       };
       \addplot [black, no markers, line width=1pt,dashed] coordinates {(-0.5,0.5) (0.6,0.5)};
       \node[
            anchor=north west,
            align=left,
            font=\footnotesize
        ] at (axis cs:0.27,0.67)
        {$\epsilonPrecR{\epsilonRecall > \epsilonRecallThresh}=50\%$};
       \legend{{$\epsilonRecall > 0\%$}, {$\epsilonRecall > 25\%$},
       {$\epsilonRecall > 50\%$},{$\epsilonRecall > 75\%$}, {$\epsilonRecall =100\%$}}

     \end{axis}

    \end{tikzpicture}
    \end{adjustbox}
        \caption{\label{fig:privacy} \textbf{Left}: Simulation results of the three security metrics, evaluated on (1) $\EmbedScheme_{LSH}$ with embedding length $\emblen$ from $8$ to $16$ (round markers), $\flipbias=0$ and (2) $\EmbedScheme_{cPDQ}$ (diamond markers in red box).
    \textbf{Right}: The conditioned precision metric $\epsilonPrecR{\epsilonRecall>\epsilonRecallThresh}$ of matching attack, at different recall threshold, with $\emblen=9$ and varying $\flipbias$. 
    Error bars in both plots represent the $95\%$ confidence interval.}
    
\end{figure*}

We formally define the matching attack as
$\setting_{match}^{\image_{adv}} = (\dist,\imagespace,f_{match}^{\image_{adv}})$,
where $\dist$ is the distribution over $\imagespace$
that the client requests are sampled from
and $\image_{adv}$ is an image chosen by the adversary.
For any client submitted request with an image $\image$,
the adversary wishes to learn the value of
$f_{match}^{\image_{adv}}(\image)$.
We have $f_{match}^{\image_{adv}}(\image)=\true$ if and only if the corresponding PDQHash of the two images are the same,
i.e., $\simhash(\image)=\simhash(\image_{adv})$. 
When trying to match the client image to $\image_{adv}$,
an adversary who receives the client request through a bucketization protocol is able to filter out images that are not in the bucket.
When the client image is in fact similar to $\image_{adv}$,
it should most likely be included in the same bucket by definition of correctness.
In this case,
a $\image_{adv}$ whose similarity hash value is shared more frequently than any other image in the same bucket may boost the adversary's confidence in asserting that the client image is a similar match.
Hence, 
having an image with a more popular hash value as $\image_{adv}$ increases the adversarial advantage.
In the following experiments, we use the image that has the most popular $\simhash(\image)$ as $\image_{adv}$.
The same similarity hash appeared in $0.2\%$ of all requests.

To evaluate the privacy guarantees provided by $\EmbedScheme_{LSH}$ and $\EmbedScheme_{cPDQ}$,
we iterate over all requests in our dataset and simulate the client, server, and adversary behavior,
to compute the security metrics $\epsilonAcc$, $\epsilonAuc$, and $\epsilonPrecR{\condition}$.

\textbf{Coarse PDQHash embedding scheme~($\EmbedScheme_{cPDQ}$).}
Recall the algorithm of $\EmbedScheme_{cPDQ}$ described in \secref{sec:sbb}.
The embedding algorithm $\Embed_{cPDQ}$ takes an image~$\image$ as an input,
and follows the PDQHash algorithm but with modified parameter settings,
to generate a $16$-bit coarse PDQHash.
This allows coarse grained similarity comparison.
When receiving a request $\embimage=\Embed(\image)$,
the Bayes optimal adversary follows the strategy as described in \secref{sec:optimal}.
To be specific,
the adversary computes the likelihood of the predicate being true,
$\CondProb{\simhash(\image)=\simhash(\image_{adv})}{\Embed_{cPDQ}(\image)=\embimage}$,
and makes a binary prediction based on the computed likelihood.

\textbf{LSH-based embedding scheme~($\EmbedScheme_{LSH}$).}
Recall that in \figref{fig:sbb-protocol}, the LSH-based embedding scheme $\EmbedScheme_{LSH}$ consists of two algorithms $\Embed_{LSH}$ and 
$\Sim_{LSH}$.
$\Embed_{LSH}$ takes an image~$\image$ as an input,
and outputs the selected indexing function and the resulting coarse embedding.
To be specific,
$\Embed_{LSH}(\image)$ randomly selects a length $\emblen$ indexing function $\indexfunc$ from $\indexfuncset$,
and computes $\pnoise=\flip_{\flipbias}(\indexfunc(\embed(\image)))$.
The function is parameterized by two parameters:
the length of the indexing function $\emblen$ and the bias~$\flipbias$ to flip an index that was chosen.
Note that $\emblen$ is also the length of the coarse embedding.
$\Sim_{LSH}$ takes the output from $\Embed_{LSH}$ and a dataset $\imageset$ as input,
and outputs a candidate bucket as an output.
The function has one parameter $\coarsethreshold$, a coarse threshold to choose items for the candidate bucket.

We analyse the security guarantee of $\EmbedScheme_{LSH}$ against  the Bayes optimal adversary under the setting of a matching attack $\setting_{match}^{\image_{adv}}$.
When receiving a request $\embimage$,
as in $\embimage=\Embed(\image)$,
the Bayes optimal adversary wants to predict $f_{match}^{\image_{adv}}(\image)=\true$.
The adversary bases their prediction on 
$\CondProb{\simhash(\image)=\simhash(\image_{adv})}{\Embed_{LSH}(\image)=\embimage}$,
and computes the probability by using all the information that is revealed to them: 
the indexing function~$\indexfunc$, the resulting coarse embedding~$\pnoise$ and the added noise~$\flipbias$.  We define the distribution $\distf$ of similarity hashes of images sampled from $\dist$ as follows: $\distf(x) = \Pr[\simhash(\image) = x] = \sum\limits_{\substack{\image' \in \imagespace, \simhash(\image') = x}} \dist(\image')$.

\begin{restatable}{theorem}{bayesOpt}\label{thm:bayes-opt}
Let $\Delta_I(\simimage, p) = \Delta(I(\simimage), p)$ for a similarity hash
  $\simimage \in \bits^\len$.  Consider a fixed image $\image_{adv} \in
  \imagespace$ and a sampled image $\image \getsr \dist$. Let $\simimage_{adv} =
  \simhash(\image_{adv})$ and $\simimage = \simhash(\image)$.
% Furthermore, consider the distribution $\distf$ that $\simhash$ induces on $\bits^\len$, where $$\distf(x) = \Pr[\simhash(\image) = x] = \sum\limits_{\substack{\image' \in \imagespace,\\ \simhash(\image') = x}} \dist(\image')$$ for all $x \in \bits^\len$.
Then 
%\begin{equation*}
%\begin{multlined}
\begin{align*}
  &\Pr[\simimage = \simimage_{adv} \mid \Embed(\image) = (I, p)] \\
  &\hspace*{3em}=\hspace*{1em} \dfrac{\gamma^{\Delta_I(\simimage_{adv}, p)} \cdot (1 - \gamma)^{d - \Delta_I(\simimage_{adv}, p)} \cdot \distf(\simimage_{adv})}
{\sum\limits_{\simimage' \in \bits^\len} \gamma^{\Delta_I(\simimage', p)} \cdot
  (1 - \gamma)^{d - \Delta_I(\simimage', p)}\cdot \distf(\simimage')} \;,
\end{align*}
%\end{multlined}
%\end{equation*}
\noindent where the probability is over the coins used by $\Embed$ and the choice of $\image$ sampled from $\dist$.
\end{restatable}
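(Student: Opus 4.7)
The plan is to establish the identity by a direct application of Bayes' rule, exploiting the fact that $\Embed$ samples the indexing function $\indexfunc$ uniformly at random independently of its input image, so that a factor $1/|\indexfuncset|$ arising symmetrically cancels between numerator and denominator.

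First I would write
\begin{equation*}
\Pr[\simimage = \simimage_{adv} \mid \Embed(\image) = (\indexfunc, p)] = \frac{\Pr[\Embed(\image) = (\indexfunc, p) \mid \simimage = \simimage_{adv}] \cdot \Pr[\simimage = \simimage_{adv}]}{\Pr[\Embed(\image) = (\indexfunc, p)]},
\end{equation*}
and immediately identify the prior $\Pr[\simimage = \simimage_{adv}] = \distf(\simimage_{adv})$ from the definition of the distribution induced by $\simhash$.

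Next I would compute the likelihood. Conditioned on $\simimage = \simimage_{adv}$, the algorithm $\Embed$ first samples $\indexfunc$ uniformly from $\indexfuncset$ and then produces $p$ by applying $\flip_\flipbias$ independently to each of the $d$ bits of $\indexfunc(\simimage_{adv})$. Because the number of disagreeing positions between $p$ and $\indexfunc(\simimage_{adv})$ is exactly $\Delta_\indexfunc(\simimage_{adv}, p)$, this factors as
\begin{equation*}
\Pr[\Embed(\image) = (\indexfunc, p) \mid \simimage = \simimage_{adv}] = \frac{1}{|\indexfuncset|} \cdot \flipbias^{\Delta_\indexfunc(\simimage_{adv}, p)} (1 - \flipbias)^{d - \Delta_\indexfunc(\simimage_{adv}, p)}.
\end{equation*}
For the denominator I would invoke the law of total probability, summing the analogous conditional expression over all candidate hash values $\simimage' \in \bits^\len$ weighted by $\distf(\simimage')$. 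The factor $1/|\indexfuncset|$ appears identically in every term and cancels against the same factor in the numerator, yielding the stated closed form.

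The argument is essentially routine; the two points requiring care are (i) the independence of $\indexfunc$ from $\image$ that licenses the cancellation of $1/|\indexfuncset|$, and (ii) the product-of-Bernoullis factorization of $\Pr[\flip_\flipbias(\indexfunc(\simimage_{adv})) = p]$, which follows from the per-bit independence built into $\flip_\flipbias$. I do not anticipate any significant technical obstacle beyond these bookkeeping checks.
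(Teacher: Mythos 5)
Your proposal is correct and follows essentially the same route as the paper's own proof: Bayes' rule, the prior $\Pr[\simimage = \simimage_{adv}] = \distf(\simimage_{adv})$, a likelihood that factors into the index-selection probability times a product of independent $\flipbias$-biased Bernoulli flips, and a total-probability expansion of the denominator with the index-selection factor cancelling. The only cosmetic difference is that you write that factor explicitly as $1/|\indexfuncset|$ where the paper leaves it as $\Pr[I]$; both are constants independent of the image, which is all the cancellation requires.
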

We prove this theorem in \apref{ap:bayes}.
When analysing the security guarantee of $\EmbedScheme_{LSH}$,
we vary the parameter settings of $\emblen$ and $\flipbias$ only,
as~$\coarsethreshold$ has no impact on the adversarial advantage.
For any choice of~$\emblen$ and~$\flipbias$,
we randomly choose an index function $\indexfunc$,
and then execute the protocol for all requests in the dataset.
We fix $\indexfunc$ in the simulation because this information is revealed to the adversary and the adversary computes the likelihood conditioned on the indexing function.
We first repeat this process for at least $10$ times.
For parameter settings that result in large fluctuation in the results,
we repeat for another $10$ iterations.
We were able to obtain a stable estimate for all experiments after at most $20$ iterations.

\subsection{Privacy of SBB}
\label{sec:expr-attack}

\paragraph{Using different security metrics.}
To obtain a broad understanding of the information leakage from our protocol,
we present all three security metrics $\epsilonAuc$,
$\epsilonPrecR{\epsilonRecall>\epsilonRecallThresh}$, and $\epsilonAcc$ in Figure~\ref{fig:privacy}~(left).
To be specific, we show the results of using (1) $\EmbedScheme_{LSH}$ without added noise (i.e., $\flipbias=0$), but with varying embedding length $\emblen$,
and (2) $\EmbedScheme_{cPDQ}$, a baseline method that embeds a client request into a $16$-bit coarse PDQHash.
The Y axis denotes the value of the security metrics, ranging from $0$ to $100\%$.
The X axis denotes different methods used. 
From left to right,
we list $\EmbedScheme_{LSH}$ with $\emblen$ ranging from $8$ to $16$ (with round markers).
In the red box, to the very right,
the diamond markers represent the security metrics for $\EmbedScheme_{cPDQ}$.
A naive baseline of using the plaintext similarity embedding~(as mentioned in \secref{sec:overview}) achieves $100\%$ for all security metrics~(not included in figure).

In our setting,
accuracy measures the adversary's performance in predicting the correct class;
precision measures the adversary's confidence in predicting the positive class;
AUC measures the adversary's ability in differentiating negative classes from positive ones.
Our dataset is highly skewed, with more negative predicates than positive ones:
~only $0.2\%$ of all requests trigger positive matches.
This property may lead to a biased view when measured by certain metrics.

\textbf{\emph{Accuracy.}}
In datasets with a skewed distribution,
a trivial algorithm that always outputs the majority class,
i.e., $f_{match}^{\image_{adv}}(\image)=\false$,
may achieve higher accuracy than any meaningful algorithm that tries to differentiate positive cases from negative cases.
In fact, in some cases,
when experimenting with $\EmbedScheme_{LSH}$ with $\emblen=8$, we have
$\epsilonAcc=0$~(\figref{fig:privacy} on the left, green markers).
This indicates that the adversarial advantage as measured by $\epsilonAcc$ was based on the performance of the trivial algorithm,
hence was considered as none.
Meanwhile,
other metrics~($\epsilonAuc$ and $\epsilonPrecR{\epsilonRecall=100\%}$) show that the adversarial advantage is non-zero,
e.g., $\epsilonPrecR{\epsilonRecall=100\%}=37\%$ for $\emblen=8$.
Both $\EmbedScheme_{cPDQ}$ and $\EmbedScheme_{LSH}$  allow the adversary to have perfect accuracy improvement when performing the matching attack when of similar coarse embedding length.
However, applying $\EmbedScheme_{LSH}$ with a smaller coarse embedding length, 
e.g., $\emblen=10$ decreases the accuracy improvement to $45\%$.

\textbf{\emph{Precision.}}
When there's no noise added in $\EmbedScheme_{LSH}$~($\flipbias=0$),
$\epsilonPrecR{\epsilonRecall>\epsilonRecallThresh}$ remains the same regardless of varying recall threshold $\epsilonRecallThresh$.
We will expand on this point later.
With the same value of precision, a larger recall
specifies more true positive predicates that the adversary may correctly classify with high confidence,
hence larger privacy damage.
When using $\EmbedScheme_{LSH}$,
decreasing the length of coarse embeddings~($\emblen < 12$)
decreases adversarial precision,
improving security.
However, with embeddings of similar length,
$\EmbedScheme_{LSH}$ and $\EmbedScheme_{cPDQ}$ both behave poorly~($\emblen=15,16$ compared with $\EmbedScheme_{cPDQ}$).

\textbf{\emph{AUC.}}
Regardless of the embedding schemes, $\epsilonAuc$ is almost $100\%$.
The reason is that there are disproportionally many images for which the predicate evaluates as negative that can be easily differentiated from positive ones.
Hence,
most images with different PDQHashes from $\image_{adv}$ are assigned to different buckets than $\image_{adv}$.
Therefore, when measured by the adversary confidence of differentiating negative cases from positive ones,
as most of the negative cases can be distinguished correctly,
the embedding schemes behave poorly.
Note that the definition of AUC is in direct conflict with the utility of an SBB scheme,
which allows efficiently ruling out images that are not similar with a given client input.

None of the metrics ensures a lower bound for another.  While testing on all
metrics of adversarial advantage offers a better understanding, it is more
practical to focus on a specific security metric that suits the application
context.  We focus on increasing the adversary's uncertainty in classifying a
positive match. This aligns with previous work in the machine learning community
that recommends the precision-recall metric over both AUC and accuracy, when
evaluating prediction algorithms on highly imbalanced datasets, especially when
correctly predicting the positive class is valued
more~\cite{saito2015precision,raghavan1989critical,davis2006relationship,manning1999foundations}.
Hence, $\epsilonPrecR{\epsilonRecall > \epsilonRecallThresh}$ fits our purpose the best, and we focus on it
in the following discussion.

\paragraph{Adding noise.}We now evaluate the security of $\EmbedScheme_{LSH}$ with added noise,
i.e., have $\flipbias > 0$.
We fix the value of $\emblen$ with $\emblen=9$ and present results for $\epsilonPrecR{\epsilonRecall>\epsilonRecallThresh}$ in \figref{fig:privacy}~(right)
with different values of $\epsilonRecallThresh$.
The results are similar for other values of $\emblen$,
though the exact value of $\epsilonPrecR{\epsilonRecall>\epsilonRecallThresh}$ differs.
The Y-axis denotes $\epsilonPrecR{\epsilonRecall>\epsilonRecallThresh}$ and the X-axis denotes increasing value of $\flipbias$.

Precision metrics conditioned with different recall thresholds have different meanings.
For example, 
there are 2,533 requests in total that contain images similar to $\image_{adv}$.
The metric $\epsilonPrecR{\epsilonRecall>0\%}$ measures the confidence of the adversary catching one true positive case out of all, 
while $\epsilonPrecR{\epsilonRecall=100\%}$ measures the confidence of the
adversary classifying all 2,533 cases correctly.
Naturally we have   
$\epsilonPrecR{\epsilonRecall>0\%} \geq \epsilonPrecR{\epsilonRecall=100\%}$.
On the other hand,
$\epsilonPrecR{\epsilonRecall=100\%}$ indicates more advantage for the adversary when having the same value as $\epsilonPrecR{\epsilonRecall>0\%}$.
When no noise is added~($\flipbias=0$),
the adversary computes the same likelihood for all true positive cases when following the strategy as described in \secref{sec:optimal}.
Hence $\epsilonPrecR{\epsilonRecall>\epsilonRecallThresh}$ remains the same regardless of the recall threshold $r$.
With increasing $\flipbias>0$, 
it gets harder for the adversary to have high precision while maintaining large recall~(green, purple and orange lines).

Even a small value for~$\flipbias$ improves the privacy of the client request.
For example,
when $\flipbias = 0.05$ and $\emblen=9$, the likelihood that at least one value in a client request gets changed is $1-(1-\flipbias)^\emblen=37\%$.
That means that in the majority of queries, none of the client request bits will
be flipped,
nevertheless the possibility that they could have been affects adversarial
precision.
For example, this results in a drop from $53\%$~(leftmost orange node in
\figref{fig:privacy}, right chart) to $40\%$~(second green node from left) for $\epsilonPrecR{\epsilonRecall > \epsilonRecallThresh}$ for $\epsilonRecallThresh \leq 50\%$.
There's a more drastic drop when $\epsilonRecallThresh$ is larger.
When the adversary has to identify more than $50\%$ of the true positives, 
they have to lower the threshold $\probThreshold$ in predicting a positive answer,
this leads to a larger likelihood of being impacted from the uncertainty introduced by the flipping bias.
Nevertheless, when having $\flipbias \geq 0.05$, $\epsilonPrecR{\epsilonRecall>\epsilonRecallThresh}$ is smaller than $50\%$~(noted by the horizontal line) for all recall thresholds.
This indicates that for any given query that the $\advAprec$ predicts as $\true$, the adversarial success rate is lower than randomly flipping a coin.

In summary, these results show that using coarse
PDQHash $\EmbedScheme_{cPDQ}$ fails to provide privacy for clients.  The naive
solution of revealing the plaintext similarity embeddings to the server also
provides no privacy.  Different security metrics demonstrate different aspects
of adversarial advantage.  We focus on $\epsilonPrecR{\epsilonRecall>\epsilonRecallThresh}$ as its
definition fits our privacy goal the best.  For our purpose, we consider
$\epsilonPrecR{\epsilonRecall > 0\%} < 50\%$ as our security goal, i.e., when the majority of an
adversary's positive guesses~(given that there's at least one) are wrong.  Given
our empirical analysis, we suggest that a reasonable choice of parameters is embedding length $\emblen=9$ and
flipping bias $\flipbias=0.05$, but caution that the privacy performance may
vary in practice should the image distribution be very different (see
\secref{sec:limitations}).

\subsection{Correctness and compression efficiency}
\label{sec:correctness-expr}

We show the tradeoff between correctness and bucket compression rate under
the security parameters suggested above ($\emblen=9$,
$\flipbias=0.05$).
We vary the value of the coarse threshold $\coarsethreshold$,
which specifies the bucket being sent from the server side when performing $\Sim_{LSH}$.
Formally, we refer to the notion of correctness as $\epsilon$ in the definition of $(\threshold, \epsilon, \dist)$-correct,
bucket compression rate as $\alpha$ in that of $(\imageset,\alpha,\dist)$-compressing~(see \secref{sec:sbb}).
We use the dataset as~$\dist$,
which we refer to as $\dist_{twitter}$ and all the possible values of PDQHash in the dataset as~$\imageset$.
We randomly select 2 million requests with replacement and perform the protocol, 
and take the average value of the correctness and compression rate from all iterations. 

\begin{figure}
\centering
\begin{adjustbox}{width=0.45\textwidth}
   \begin{tikzpicture}
     \begin{axis}[
            name=plot1,
    	    legend columns = 1,
    	    ticklabel style = {font=\footnotesize},
    	    xmin=-0.05,
    	    xmax=1.05,
    	    ymin=0.83,
    	    ymax=1.02,
            yticklabel={\pgfmathparse{\tick*100}\pgfmathprintnumber{\pgfmathresult}\%},
            xticklabel={\pgfmathparse{\tick*100}\pgfmathprintnumber{\pgfmathresult}\%},
            ylabel = {Correctness $\epsilon$},
            xlabel = {Compression efficiency $\alpha$},
            ylabel style={font=\footnotesize, at={(axis description cs:0.08,.5)},anchor=south},
            xlabel style = {font=\footnotesize},
    		height=3.8cm,
    		width=0.45\textwidth,
    		scaled y ticks=false,
    		legend style={/tikz/every even column/.append style={column sep=0.38cm}, at={(1.02,0.25)},anchor=south west, font=\footnotesize}
    	]
        \addplot[clr5_1, only marks,error bars/.cd, y dir=both, x dir=both, y explicit, x explicit] coordinates {
(0.0210, 0.9291) += (0.0000, 0.0004) -= (0.0000, 0.0004)
(0.0932, 0.9917) += (0.0000, 0.0001) -= (0.0000, 0.0001)
(0.2573, 0.9994) += (0.0000, 0.0000) -= (0.0000, 0.0000)
(0.5002, 1.0000) += (0.0000, 0.0000) -= (0.0000, 0.0000)
(0.7430, 1.0000) += (0.0000, 0.0000) -= (0.0000, 0.0000)
(0.9070, 1.0000) += (0.0000, 0.0000) -= (0.0000, 0.0000)
(0.9791, 1.0000) += (0.0000, 0.0000) -= (0.0000, 0.0000)
    	};  		
      \addplot[clr5_2, only marks,error bars/.cd, y dir=both, y explicit, x dir=both, x explicit] coordinates {
(0.0210, 0.9052) += (0.0000, 0.0004) -= (0.0000, 0.0004)
(0.0932, 0.9837) += (0.0000, 0.0002) -= (0.0000, 0.0002)
(0.2573, 0.9977) += (0.0000, 0.0001) -= (0.0000, 0.0001)
(0.5002, 0.9997) += (0.0000, 0.0000) -= (0.0000, 0.0000)
(0.7430, 1.0000) += (0.0000, 0.0000) -= (0.0000, 0.0000)
(0.9070, 1.0000) += (0.0000, 0.0000) -= (0.0000, 0.0000)
(0.9791, 1.0000) += (0.0000, 0.0000) -= (0.0000, 0.0000)
       };

    	\addplot[clr5_3, only marks,error bars/.cd, y dir=both, y explicit,] coordinates {
(0.0022, 0.5563) += (0.0000, 0.0006) -= (0.0000, 0.0006)
(0.0210, 0.8622) += (0.0000, 0.0004) -= (0.0000, 0.0004)
(0.0932, 0.9586) += (0.0000, 0.0002) -= (0.0000, 0.0002)
(0.2573, 0.9878) += (0.0000, 0.0001) -= (0.0000, 0.0001)
(0.5002, 0.9970) += (0.0000, 0.0000) -= (0.0000, 0.0000)
(0.7430, 0.9994) += (0.0000, 0.0000) -= (0.0000, 0.0000)
(0.9070, 0.9999) += (0.0000, 0.0000) -= (0.0000, 0.0000)
(0.9791, 1.0000) += (0.0000, 0.0000) -= (0.0000, 0.0000)
    	};
    	    	\addplot[clr5_4, only marks,error bars/.cd, y dir=both, y explicit,] coordinates {
(0.0022, 0.5475) += (0.0000, 0.0006) -= (0.0000, 0.0006)
(0.0210, 0.8509) += (0.0000, 0.0004) -= (0.0000, 0.0004)
(0.0932, 0.9502) += (0.0000, 0.0002) -= (0.0000, 0.0002)
(0.2573, 0.9836) += (0.0000, 0.0001) -= (0.0000, 0.0001)
(0.5002, 0.9955) += (0.0000, 0.0001) -= (0.0000, 0.0001)
(0.7430, 0.9991) += (0.0000, 0.0000) -= (0.0000, 0.0000)
(0.9070, 0.9999) += (0.0000, 0.0000) -= (0.0000, 0.0000)
(0.9791, 1.0000) += (0.0000, 0.0000) -= (0.0000, 0.0000)
    	};
       \addplot [black, no markers, line width=1pt,dashed] coordinates {(-0.1,0.95) (1.1,0.95)};

        \legend{{$\threshold=0$,$\threshold=32$,$\threshold=64$,$\threshold=70$}}
        \end{axis}

        \end{tikzpicture}
\end{adjustbox}
        \caption{\label{fig:accuracy} Correctness~(with varying similarity threshold $
        \threshold$) and compression rate tradeoff.
        The dashed line marks $95\%$.}
\end{figure}

In Figure~\ref{fig:accuracy},
we present the trade-off between correctness and compression efficiency.
We plot the correctness $\epsilon$ (Y-axis) as the average compression rate
$\alpha$ varies (X-axis). 
The dashed horizontal line represents $95\%$.
We experiment with different definitions of similarity, i.e., with different values of $\threshold$, 
denoted by different colors.
Each node represents a parameter setup with the resulting correctness and
compression rate.
For example,
the second blue node from left in~\figref{fig:accuracy}
represents that with $\coarsethreshold=3$,
the resulting embedding scheme is $(32,98\%, \dist_{twitter})$-correct and $(\imageset, 9.3\%, \dist_{twitter})$-compact.
Note that our notion of correctness is stricter than the false negative rate defined in the prior work~\cite{kulshrestha2021identifying} for the Kulshrestha-Mayer protocol,
as the prior work measures the rate of transformed images that are not mapped to the original one,
constrained on the fact that the PDQHash of the transformed image stays in normalized Hamming distance $0.1$ to the original.
The protocol~\cite{kulshrestha2021identifying} reported an average false positive rate of
$16.8\%$ on a different dataset. 
Using our definition, this would be estimated as $(25.6, 83.2\%, \dist_{KM})$-correct.

In conclusion, these experiments suggest that for $\dist_{twitter}$, one
can achieve $\epsilonPrecR{\epsilonRecall > 0\%} < 50\%$, over $95\%$ correctness for the investigated values of
$\threshold$ and $9.3\%$ compression rate using $\EmbedScheme_{LSH}$ with $\emblen=9$, $\flipbias=0.05$ and
$\coarsethreshold=3$.
Hence, the resulting scheme achieves an almost order of magnitude reduction in the amount of data input to
a second-stage similarity protocol.

\begin{table}
\centering
\footnotesize
\begin{tabular}[t]{llr}
\toprule
Dataset &  Description & Images\\
\midrule
IMDB-WIKI~\cite{Rothe-ICCVW-2015,Rothe-IJCV-2018} & Faces. & 523,051\\
COCO~\cite{lin2014microsoft} & Common objects. & 123,403\\
T4SA~\cite{Vadicamo_2017_ICCVW} & Twitter images. & 1,473,394\\
Webvision 2.0~\cite{li2017webvision} & Flickr and Google images. & 13,907,566\\
\bottomrule
\end{tabular}

\caption{\label{tab:end-to-end-dataset} Dataset statistics and description.}
   \vspace{-0.4cm} 
\end{table}

\subsection{End-to-end Simulation}
\label{sec:end-to-end}

We now perform end-to-end simulation on varying sizes of blocklists
$\imageset$ to demonstrate the improvement of execution time and bandwidth 
for different similarity protocols combined with SBB.
For the experiments, we use parameters suggested in \secref{sec:correctness-expr},
i.e. $\emblen=9$, $\flipbias=0.05$ and
$\coarsethreshold=3$.

\begin{table*}
\centering
\fontsize{7}{9}\selectfont
\begin{tabular}[t]{rrrrrrrrr}
\toprule
  \multirow{2}{*}{$|\imageset|$} & \multicolumn{2}{c}{Execution Time~(s)} & \multicolumn{2}{c}{Total Bandwidth~(MiB)} &  \multicolumn{2}{c}{Execution Time~(s)} & \multicolumn{2}{c}{Total Bandwidth~(MiB)} \\
  & \multicolumn{1}{c}{No SBB} & \multicolumn{1}{c}{SBB} & \multicolumn{1}{c}{No SBB} & \multicolumn{1}{c}{SBB} &\multicolumn{1}{c}{No SBB} & \multicolumn{1}{c}{SBB} & \multicolumn{1}{c}{No SBB} &\multicolumn{1}{c}{SBB}\\
\midrule
&\multicolumn{4}{c}{\textbf{Similarity Embedding Retrieval}} & \multicolumn{4}{c}{\textbf{Secure Sketch}}\\
\midrule
$2^{18}$&$0.76$ $(0.00)$&$0.02$ $(0.00)$&$18.41$ $(0.00)$&$0.21$   $(0.01)$ & $1664.98$ $(\phantom{3}870.78)$&$2.77$ $(\phantom{1}0.17)$&$78.81$ $(0.36)$&$0.89$ $(0.05)$\\
$2^{19}$&$1.55$ $(0.01)$&$0.03$ $(0.01)$&$36.84$ $(0.02)$&$0.46$   $(0.15)$ & $5702.44$ $(3049.44)$&$6.03$ $(\phantom{1}0.41)$&$158.66$ $(2.39)$&$1.90$ $(0.14)$\\
  $2^{20}$&$3.09$ $(0.01)$&$0.06$ $(0.02)$&$73.65$ $(0.00)$&$0.83$   $(0.06)$ &\multicolumn{1}{c}{--} &  $12.09$ $(\phantom{1}0.99)$&\multicolumn{1}{c}{--}&$3.56$ $(0.26)$\\
  $2^{21}$&$6.17$ $(0.01)$&$0.10$ $(0.01)$&$147.31$ $(0.01)$&$1.69$   $(0.12)$ &\multicolumn{1}{c}{--} & $26.70$ $(\phantom{1}2.40)$&\multicolumn{1}{c}{--}&$7.18$ $(0.55)$\\
  $2^{22}$&$12.41$ $(0.07)$&$0.43$ $(0.99)$&$294.61$ $(0.03)$&$3.33$   $(0.25)$  & \multicolumn{1}{c}{--} & $61.99$ $(\phantom{1}6.63)$&\multicolumn{1}{c}{--}&$14.35$ $(1.17)$\\
  $2^{23}$& \multicolumn{1}{c}{--} &$0.40$ $(0.02)$& \multicolumn{1}{c}{--} &$6.70$ $(0.41)$ & \multicolumn{1}{c}{--} & $157.57$ $(14.41)$&\multicolumn{1}{c}{--}&$28.40$ $(1.68)$\\
\midrule
&\multicolumn{4}{c}{\textbf{CrypTen}} & \multicolumn{4}{c}{\textbf{EMP}}\\
\midrule
  $2^{13}$&$17.85$ $(0.18)$&$1.18$ $(0.31)$&$941.42$ $(0.37)$&$11.26$
  $(\phantom{22}1.64)$ & $13.14$ $(0.06)$&$0.21$ $(0.01)$ & $1520.63$ $(0.44)$ &$17.14$ $(\phantom{33}1.82)$ \\

  $2^{14}$&$36.65$ $(0.44)$&$1.33$ $(0.30)$&$1882.82$ $(0.87)$&$21.40$
  $(\phantom{22}2.14)$& $26.93$ $(0.98)$&$0.36$ $(0.03)$ & $3040.57$ $(1.39)$ &$34.52$ $(\phantom{33}4.12)$ \\

  $2^{15}$&$71.76$ $(0.07)$&$1.55$ $(0.06)$&$3766.55$ $(0.81)$&$42.43$
  $(\phantom{22}4.06)$& $54.61$ $(1.26)$&$0.67$ $(0.03)$ & $6079.48$ $(2.67)$ &$70.45$ $(\phantom{33}3.68)$  \\

$2^{16}$&$147.10$ $(0.66)$&$2.19$ $(0.08)$&$7534.29$ $(0.90)$&$83.77$ $(\phantom{22}6.96)$  &$117.52$ $(8.11)$&$1.21$ $(0.14)$ & $12158.00$ $(0.88)$ &$133.42$ $(\phantom{3}16.20)$ \\

  $2^{17}$&\multicolumn{1}{c}{--} &$3.55$ $(0.40)$&\multicolumn{1}{c}{--} &$167.70$ $(\phantom{2}14.69)$&\multicolumn{1}{c}{--} & $2.42$ $(0.13)$ & \multicolumn{1}{c}{--} &$275.52$ $(\phantom{3}15.32)$  \\

  $2^{18}$&\multicolumn{1}{c}{--} &$6.45$ $(0.40)$&\multicolumn{1}{c}{--} &$347.53$ $(\phantom{2}22.65)$&\multicolumn{1}{c}{--} & $4.79$ $(0.28)$ & \multicolumn{1}{c}{--} &$551.26$ $(\phantom{3}32.51)$  \\
  
  $2^{19}$&\multicolumn{1}{c}{--}&$13.77$ $(1.22)$&\multicolumn{1}{c}{--}&$695.30$ $(\phantom{2}45.49)$&\multicolumn{1}{c}{--}&$9.63$ $(0.52)$ & \multicolumn{1}{c}{--} &$1097.53$ $(\phantom{3}59.78)$ \\
  
  $2^{20}$&\multicolumn{1}{c}{--}&$27.89$ $(1.88)$&\multicolumn{1}{c}{--}&$1386.84$ $(107.27)$& \multicolumn{1}{c}{--}&$20.24$ $(1.26)$ & \multicolumn{1}{c}{--} &$2299.33$ $(143.63)$ \\

\midrule
\end{tabular}
\caption{\label{tab:end-to-end} Average time and bandwidth of similarity
protocols without and with SBB for four different similarity
testing protocols. Dashes (--) indicate when execution failed 
due to poor scaling. Numbers in parentheses are standard deviations.}
\end{table*}

\textbf{Datasets.}
To form varying sizes of blocklist $\imageset$, we randomly sample images from
the datasets that were used in prior
work~\cite{kulshrestha2021identifying}.\footnote{These are
unfortunately not suitable
for the privacy simulations of previous sections; they don't include information about sharing
frequency.}
The details of the datasets are listed in \tabref{tab:end-to-end-dataset}.
In particular, for each experiment, we generate $\imageset$ of the
requisite size by uniformly selecting images (without replacement) from the union of the 
COCO, T4SA, and Webvision~2.0 datasets.
For client requests,
we sample half of the requests from the IMDB-WIKI dataset to simulate requests that don't match any image in the block list,
and the rest from the generated~$\imageset$ to simulate client requests that return a match.

For each set of experiments with a randomly generated $\imageset$ tested 
with similarity embedding retrieval (the server simply sends the embeddings of
bucket entries to the client),
we provide measurements over $40$ iterations.
With secure sketch,
we use $20$ iterations and with 2PC protocols,
we use $10$ iterations because these take significantly longer 
to run.

\textbf{Implementation.} We use an AWS EC2 p2.xlarge\footnote{\url{https://aws.amazon.com/ec2/instance-types/p2/}}
instance with 61\,GiB of memory and 200\,GiB disk storage for the server side computation.  The instance is
initialized with the deep learning AMI provided by Amazon.  An AWS EC2
t2.small\footnote{\url{https://aws.amazon.com/ec2/instance-types/t2/}} instance with 2\,GiB of memory and 64\,GiB storage in the same region acted as a client.
The measured bandwidth between the two instances was 1\,Gbits/sec for both
directions and the network latency was $0.9$\,ms.
The server side implementation uses Python and 
is parallelized using the GPU.  The client side implementation uses Go.  
For the secure sketch protocol, 
we use an oblivious pseudorandom function~(OPRF),
implemented by the circl Go library from CloudFlare~\cite{circl}.

For the 2PC protocols our setup is identical except that we use an AWS EC2
t2.large instance with 8\,GiB of RAM as the client to be able to handle the
2PC frameworks.  The bandwidth was measured to be 1.01\,Gbits/sec from server to
client, 721\,Mbits/sec from client to server. The observed network latency was
$0.3$\,ms.  For both the CrypTen and EMP frameworks, we used the computed
functionality that checks if there exists a hash among the server's input that
has Hamming distance less than $\threshold$ to the client-provided hash.
The server's input is the entire $\imageset$, and the generated SBB bucket,
in the non-bucketized and bucketized setting, respectively. We further
XOR the output of this comparison with a randomly-generated client-provided bit
so that only the client learns the result. For CrypTen, for simplicity we
configured the trusted-third party for Beaver triple generation to run on the
same EC2 instance as the client (so-called trusted first-party mode). This is
not a secure configuration but provides lower bounds on performance (moving
Beaver generation to another server would decrease performance).  Experimental
results for CrypTen should therefore be considered to be lower bounds on
performance for secure deployments.
Note that our 2PC prototypes are not optimized, and absolute timings would be
improved using custom protocols for our setting such as the Kulshrestha-Mayer protocol~\cite{kulshrestha2021identifying}.
However, it is unclear if the protocol can be combined with SBB since it requires generating all buckets at setup time.

\textbf{Results.}
We present the average total execution time, average total bandwidth, and the speedup
provided by SBB for varying $|\imageset|$
in \tabref{tab:end-to-end}. 
The execution time and total bandwidth do not vary much between client
requests that match images in $\imageset$ and those that do not.
Many of the similarity testing protocols do not scale well, and we denote by dashes
in the table experiments that failed to complete. Typically this was due to
the client instance running out of memory. In all these cases, SBB was able to
increase scaling to complete executions with the available resources.

Our results show that SBB drastically improves the similarity protocol's
performance, both in terms of execution time and total bandwidth.  For
similarity embedding retrieval, SBB provides a $29\times$
speedup~($|\imageset|=2^{22}$) in execution time.  For large-scale
datasets~($|\imageset|\leq 2^{23}$), similarity embedding retrieval with SBB 
returns the answer in real time, under $0.5$ seconds.  For the secure
sketch protocol,
the improvement is even larger,  the speedup is at least $601\times$ for
$|\imageset|=2^{18}$.  For 2PC protocols, the improvement provided by SBB grows
larger as $\imageset$ becomes bigger, when $|\imageset|=2^{16}$, the speedup in
execution time is $67\times$ and $97\times$ for CrypTen and EMP, respectively.
For $|\imageset|=2^{20}$, EMP with SBB takes less time than the Kulshrestha-Mayer protocol~($20.24$s vs $27.5$s), however it requires larger bandwidth.

\section{Related Work}
\label{sec:related}

\textbf{Secure proximity search.}
Secure proximity search based on general multi-party computation has been used in many applications,
including privacy-preserving facial recognition~\cite{erkin2009privacy,sadeghi2009efficient},
biometric authentication~\cite{barni2010privacy,evans2011efficient}, querying sensitive health data~\cite{asharov2018privacy}, and more.
However,
these works don't scale sufficiently for our use case.
More scalable solutions include fuzzy extractors ~\cite{dodis2004fuzzy,juels1999fuzzy},
which give a small piece of client information to the server,
that does not leak information about the secret,
in order to derive secrets from noisy readings such as biometrics.
However, the security guarantee is based on the assumption that the distribution of the secret~(for example, fingerprints) has enough minimum entropy,
which is not necessarily true for image distributions.

\noindent\textbf{Private information retrieval.}
Chor et al.~\cite{chor1995private} introduced the concept of private information retrieval,
a type of protocol that enables a client to retrieve an item from a database such that the identity of the item is not revealed to the server.
The protocol requires clients to supply the index of the data item that they are querying about for retrieval purposes.
However, this is unlikely to be applied to our use case.
The most likely solution for similarity lookup is content-based privacy preserving retrieval~\cite{shashank2008private,xia2016privacy,lu2009enabling}.
Previous works~\cite{xia2016privacy,lu2009enabling} in this area assume three parties involved in this type of protocol,
a data owner, the client who queries the service and an actual server where the service is hosted.
The data owner encodes images or other types of multimedia into feature vectors that are further encoded with searchable encryption schemes and used as indices.
The server is made oblivious to the actual content of indices and the content of corresponding data items.
The client queries the server by generating indices from their images and receives the data items as answers.
The threat model doesn't prevent the case when the data owner colludes with the server,
hence cannot be directly applied to our scenario, 
where the data owner is the server.

\noindent
\textbf{Secure k nearest neighbors search.}
Another possible solution is to use secure k nearest neighbors search~(k-NNS) to look for similar items at the server side without revealing client information.
Most of the k-NNS solutions require a linear scan of the database that is queried against.
Recent work~\cite{chen2020sanns} proposed a sub-linear clustering-based algorithm,
yet the solution requires significant preprocessing time for each client.
Similar to our work,
Riazi et al.~\cite{riazi2016sub} used locality sensitive hashing to encode client queries for efficient k-NNS.
Different from our approach, they preserved user privacy by converting the LSH encodings to secure bits.
However, the notion of security is different in their work,
in particular,
an adversary can estimate the similarity of two given data points based on the encodings. 
This makes the protocol vulnerable to the matching attack,
which we addressed in our work.

\noindent
\textbf{Location-based services.} We draw comparisons between our bucketization setting and that of location-based services (LBSs). Andrés et al. \cite{geoind} propose a privacy-preserving mechanism in which mobile clients send their noise-perturbed locations to a server in order to obtain recommendations for nearby restaurants. One may view the noise-perturbed location as a coarse embedding and the server-provided list of restaurants as a similarity bucket. Similar to our coarse embedding scheme, the mechanism of Andrés et al. suffers from privacy loss when applied repeatedly to the same user input.
These connections suggest that our framework could have applications to reasoning about LBS privacy.
Conversely, insights from location privacy may serve as inspiration for improved SBB mechanisms.

\noindent
\textbf{Privacy measures.}
Prior work has proposed measuring privacy using an adversary's expected error when making inferences based on a posterior distribution on user inputs \cite{quantifyingLocPriv, isGeoIndWhat, optLocPriv}. Recent work has explored the Bayes security measure \cite{bayesMeasure}, which is similar to $\epsilonAcc$, but involves a security game in which the adversary attempts to recover a secret input as opposed to guessing a predicate on the secret input. Local differential privacy \cite{dwork2008differential} has also proven to be a popular worst-case privacy measure, but often incurs high correctness penalties.
Although similar, these metrics can not be directly applied to our scenario nor replace $\epsilonAuc$ and $\epsilonPrecR{\epsilonRecall > \epsilonRecallThresh}$.

\section{Limitations}
\label{sec:limitations}

Our work naturally suffers from several limitations that should be explored
further before deployments are considered.  Most notably, use of an SBB
mechanism fundamentally must leak some information to the server to trade-off
client privacy for efficiency.  In some contexts leaking even a single bit of
information about user content would be detrimental, in which case our
techniques are insufficiently private.  We speculate that leaking some
information about client images is, however, fundamental to achieve practical
performance in deployment for large~$\imageset$.  How to provide a formal
treatment establishing that scaling requires some leakage and what that means
for moderation mechanisms remain open questions.

Second, our empirical analyses focus on matching attacks for a single query,
which excludes some other potential threats. In particular it does not address
adversarially-known correlations between multiple images queried by one or more
clients.  A simple example, mentioned in \secref{sec:sbb}, is an `averaging'
attack against our LSH-based coarse embedding in which the adversary obtains a
large number of embeddings all for the same image~$\image$. Then the adversary
can average out the per-bit noise and recover the granular embedding
$\embed(\image)$.  We discuss simulation results for this scenario in \apref{ap:repeated}.  
The results indicate that,
similar to privacy-preserving mechanisms for location-based
services~\cite{geoind,quantifyingLocPriv, isGeoIndWhat,bayesMeasure}, repeated
queries on the same content drastically weaken the privacy guarantee of SBB:
an adversary that sees multiple SBB outputs that it knows are for the same
image can obtain near-perfect
matching  attack precision for almost all recall thresholds.

To address risk here, client software might cache images they've recently queried.
The client would not query the similarity service if a 
new image is too close to a prior image, and instead just reuse the cached result for
the latter.  Caching may not be feasible in all cases, and doesn't speak to
cross-user sharing of images, which may be inferrable from traffic analysis
should the adversary have access both to the similarity service and the
messaging platform. Another approach 
would be to somehow ensure that the same noise is added to the same
image, regardless of which client is sending it. This could possibly be done by
having some clients share a secret key, and use it to apply a 
pseudorandom function to the image (or its PDQHash) to derive the coins needed
for the random choices underlying our coarse embedding. 
Here an adversary's  $\epsilonPrecR{\epsilonRecall > 0\%}$ advantage remains at $50\%$ regardless of
the increasing number of repeated queries. But this doesn't account for other
potentially adversarially-known correlations across images (e.g., they are
almost identical), and may be fragile in the face of malicious client
software. Moreover, sending the same SBB embedding for the same image
would seem to increase susceptability to linking attacks in which
an adversary infers when two or more queries correspond to the same
image. We are unsure which scenario bears more risk in practice. 
We leave the exploration of these mitigations to future work.

A related limitation is the exclusive use of empiricism for evaluation. While we focus on
Bayes-optimal adversaries, it would be preferable to couple empiricism with
analyses providing bounds on adversarial success.  
While our definitional
framework provides the basis for proving bounds on, e.g.,
precision for particular data distributions, 
we do not yet have proofs and it appears to be challenging.  We emphasize that such results
cannot fully replace empirical work, because even formal results would
necessarily make assumptions about data that must be empirically validated.
Nevertheless, we consider the empirical results presented in this initial work
as a proof-of-concept of the SBB framework and encourage future works to further
examine the theoretical bounds for this approach. 

Finally, the public perceptual hash algorithms that SBB relies on
increases the risk of evasion attacks that seek to modify images just enough to avoid
detection. This risk seems particularly acute
when using a similarity testing protocol that sends a bucket
of PDQHash values to the client, as the adversary could extract 
these values from a client to inform their attacks.
Allowing users to report
misinformation images to frequently update the database may mitigate this risk.

\section{Conclusion and Future Directions}
\label{sec:conclusion}

In this paper, in order to allow efficient privacy-preserving similarity
testing, we defined the framework of similarity-based bucketization and
formalized a set of privacy goals that are important to this application.  We
consider the information that the adversary wants to infer from a client input
as the answer to a prediction task.  An adversary's advantage is measured by
their uncertainty regarding the prediction, using metrics that are widely
applied in machine learning.

Towards a realistic prototype for SBB, we focus on image similarity testing.
Driven by the privacy formalization, we ran simulations on real-world social
media data and analyzed the SBB protocol's security against a ``matching
attack''.  The attack refers to the scenario where an adversary tries to infer
if a client input is similar to an adversary-chosen image.  Using our framework,
deployments can tune the performance/privacy trade-off depending on the
application context.  We then test SBB's performance when composed with four
similarity protocols with varying server privacy guarantees for the server
content.  We show that the composition with SBB significantly reduces the
protocol latency and required bandwidth. While further research is needed to
address various open questions and limitations of our results,
we nevertheless believe that SBB represents a promising approach to
scaling private similarity testing in practice.

\section{Acknowledgements}

We thank Lucas Dixon, Nirvan Tyagi, and Paul
  Grubbs for useful discussions and
  feedback about this work. We
thank Laurens van der Maaten and Brian Knott for
answering our questions regarding CrypTen, and Xiao Wang for answering questions
about EMP.  This research was supported in part by NSF award \#1704527.

%%
%% The next two lines define the bibliography style to be used, and
%% the bibliography file.
\bibliographystyle{plain}
\bibliography{reference}

\appendix

%%%%%%%%%%%%%%%%%%%%%%%%%%%%%%%%%%%%%%%%%%%%%%%%%%%%%%%%%%%%%%%%%%%%%%%%%%%%%%%%
\section{A Secure Sketch Similarity Protocol}
\label{sec:sssp}

Here we provide further details regarding our secure sketch based protocol. 
A secure sketch~(SS) is a
cryptographic mechanism originally suggested for use with biometrics or other
fuzzy secrets.
Formally, an SS is a pair of algorithms $(\Sketch,\Rec)$. The
first is randomized, takes as input (in our context) a similarity representation $\simimage=\embed(\image)$ computed from image $\image$, and
outputs a bit string $\sketchval$, called a sketch. The recovery algorithm $\Rec$ takes as input a
sketch $\sketchval$ and a value $\simimage'$ and outputs a corrected value~$\simimage''$.
Informally, correctness requires that $\simimage' = \simimage''$ 
if $\Delta(\simimage,\simimage') \le \threshold$. 
For our purpose, 
we use secure sketches that work on the output space of similarity embedding $\simhash$ for $\Delta$ being Hamming distance. 

To realize our secure-sketch similarity protocol~(SSSP),
we use an oblivious pseudorandom function~(OPRF),
implemented by the circl Go library from CloudFlare~\cite{circl}.
It suffices to have a PRF $F_K(X)$ whose output is a member of a group,
and for which $F_K(X)^{K'} = F_{K\cdot K'}(X)$ for all $K,K',X$,
and where the keys form a group with
operation `$\cdot$'.  
The SSSP starts by having the
server compute a secure sketch $\sketchval$ for each image $\image' \in \bucketset$ using the Reed-Muller code~\cite{reed1953class,muller1954application},
as well
as the precomputed OPRF output for each such $\image'$. The resulting values
$\sketchval_1,\ldots,\sketchval_{|\bucketset|}$ and
$F_K(\simhash(\image_1')),\ldots,F_K(\simhash(\image_{|\bucketset|}'))$ are sent to the client. The
client chooses a random OPRF blinding key~$\tau$ and 
computes $\prfval_i \gets F_{\tau}(\Rec(\sketchval_i,\simhash(\image)))$ for each sketch in
the bucket. It sends the resulting $\prfval_1,\ldots,\prfval_{|\bucketset|}$ to
the server.
(If there are collisions in recovered similarity hashes,
the client replaces the repeat OPRF output with a random value.) 
The server computes and returns to the client
$\prfval_1^K,\ldots,\prfval_{|\bucketset|}^K$. Finally, the client can unblind
the OPRF values by raising them each to $\tau^{-1}$, and then looks for values
that match one of $\prfval_1,\ldots,\prfval_{|\bucketset|}$. 
The secure sketch protocol was implemented as a Go app at both the server side and client side.

\paragraph{Background on secure sketches.}
Our description of secure sketches follows~\cite{dodis2004fuzzy}.
Let $V$ be a random variable. The min-entropy of $V$ measures the probability of
correctly guessing a draw of~$V$ in a single guess. Formally it is defined as
\bnm
  \minentropy(V) = -\log \max_{v} \Prob{V = v} \;.
\enm
where the maximization is over all values $v$ in the support of~$V$.
We also can define the conditional min-entropy of a variable $V$ conditioned on
another (usually correlated) value~$Z$. Formally we define it as
\bnm
  \condminentropy{V}{Z} = -\log \sum_{z} \max_v \CondProb{V = v}{Z = z}\cdot\Prob{Z=z}
\enm
where the summation is over all values $z$ in the support of $Z$ and the
maximization  over all values $v$ in the support of $V$. In words this is just
the negative log of the expected min-entropy of $V$  conditioned on $Z$,
averaged over choice of $Z$. 

An $(\simimagespace,\minent,\redminent,\threshold)$-secure sketch is a pair of
algorithms $(\Sketch,\Rec)$ for which:
\begin{newitemize}
\item The sketching algorithm $\Sketch$ takes as input $\simimage \in \simimagespace$ and
outputs a string $\sketchval \in \bits^*$. 
\item The recovery algorithm $\Rec$ takes as input a string $\sketchval$ and an
image similarity representation $\simimage'$, and outputs a corrected value $\simimage''$.  Correctness mandates
that if $\Delta(\simimage,\simimage') \le \threshold$ then $\simimage'' = \simimage$. 
Otherwise no guarantees about $\simimage''$ are made. 
\item The following security guarantee holds. For any distribution $\dist$ over
$\simimagespace$ that has min-entropy $\minent$, then it holds that 
$\condminentropy{V}{\Sketch(V)} \ge \redminent$ where $V$ is the random variable
representing a value drawn according to $\dist$ and $\Sketch(V)$. 
\end{newitemize}

For our context, the security guarantee means that if images have high
min-entropy then the secure sketch ensures that they remain unpredictable.

\paragraph{Oblivious PRFs.} In addition to a secure sketch, our protocol also
relies on one other cryptographic object, an oblivious PRF (OPRF)~\cite{jarecki2009efficient,jarecki2014round}. An
OPRF allows one party with a message $\embimage$ to obtain the output of a 
PRF $F_K(\embimage)$ where $K$ is held by the other party. We particularly use
an OPRF construction from~\cite{jarecki2014round}, which is based in part off
earlier protocols due to Ford and Kaliski~\cite{ford2000server}.  The OPRF uses a
cryptographic group (such as a suitable elliptic curve group) $\G$ of order~$p$ with
generator $g$, a hash function $H\Colon\bits^*\rightarrow\G$, and a hash
function $H'\Colon\G\rightarrow\bits^n$ for some parameter~$n$ (e.g., $n = 256$
if one uses SHA-256 for $H'$). The  party holding input $\embimage$ begins by choosing a
random integer $\tau\getsr \Z_p$, computes $\blindval \gets H(\embimage)^\tau$ and sends the
result to the party holding the secret key $K$ (chosen uniformly 
from~$\Z_p$). Upon receiving $Y$, that party computes $\gets \blindval^K$ and sends
$Y$ back. Finally, the initiator computes $Y \gets \blindval^{\tau^{-1}}$ and outputs
$H'(\embimage,Y)$. Thus ultimately the PRF is defined as 
$F_K(\embimage) = H'(\embimage,H(\embimage)^K)$.

We note that this scheme can be extended to allow verification that the server
consistently uses the same PRF key $K$. We refer the reader
to~\cite{jarecki2014round} for details.

\paragraph{The protocol.} Our protocol involves a client whose input is an image similarity representation
$\simimage \in \simimagespace$ and a server whose input is a set of images
$\imageset \subseteq \imagespace$. The protocol is parameterized by a
similarity embedding function $\embed$ and distance threshold $\threshold$. We
assume a$(\simimagespace,\minent,\redminent,\threshold)$-secure sketch
$(\Sketch,\Rec)$ that works for the distance measure $\Delta$ associated with
$\embed$. We also use the OPRF protocol described above.  

A diagram of the protocol appears in \figref{fig:sssp}. 
Our secure-sketch similarity protocol (SSSP) starts by having the
server compute a secure sketch for each image $\image' \in \bucketset$, as well
as the OPRF output for each such $\image'$. The resulting values
$\sketchval_1,\ldots,\sketchval_{|\bucketset|}$ and
$F_K(\simhash(\image_1')),\ldots,F_K(\simhash(\image_{|\bucketset|}'))$ are sent to the client. The
client chooses a random OPRF blinding key~$\tau$ and 
computes $\prfval_i \gets F_{\tau}(\Rec(\sketchval_i,\simhash(\image)))$ for each sketch in
the bucket. It sends the resulting $\prfval_1,\ldots,\prfval_{|\bucketset|}$ to
the server.
(If there are collisions in recovered similarity hashes,
the client replaces the repeat OPRF output with a random value.) 
The server computes and returns to the client
$\prfval_1^K,\ldots,\prfval_{|\bucketset|}^K$. Finally, the client can unblind
the OPRF values by raising them each to $\tau^{-1}$, and then looks for values
that match one of $\prfval_1,\ldots,\prfval_{|\bucketset|}$. 

\begin{figure}[t]
\center
{\footnotesize
\begin{tikzpicture}[yscale=-1,node distance=0.5cm]
    \coordinate(topleft) at (0.0,0.0);
    \coordinate(clientcoord)  at (1.0,0.10);

    \draw (topleft) rectangle ++(8.5,3.6); 

    \node(client)[align=center,minimum width=3,anchor=north] at (clientcoord) 
          {\textbf{Client}($\simimage$)};
    \node(server)[align=center,minimum width=3,anchor=north] at ($(clientcoord)+(5.8,0)$) 
    {\textbf{Server}($\imageset,K$)};
    
     \node(clientline) [align=left,minimum width=3,anchor=north west] at (0.1,0.4)
          {
    $\tau\getsr \Z_p$
          };

    \node(serverline)[align=left,minimum width=3, anchor=north] at ($(server)+(0,.20)$)
          {
            For $\image_i\in\bucketset$:\\ 
            \myInd $\sketchval_i \getsr \Sketch(\embed(\image_i)))$\\
            \myInd $\prfval_i \gets F_K(\embed(\image_i))$
          };

  \draw[thick,<-] (3,1.10) -- node [text width=3cm,midway,above,align=center] 
        {$\sketchval_1,\ldots,\sketchval_{|\bucketset|}$\\
         $\prfval_1,\ldots,\prfval_{|\bucketset|}$
        } ++(2,0);

    \node(clientline) [align=left,minimum width=3,anchor=north west] at (0.1,0.85)
          {
            For $i\in[0,|\bucketset|]$:\\ 
            \myInd $\embimage_i \gets \Rec(\sketchval_i,\simimage)$\\
            \myInd $\blindval_i \gets H(\embimage_i)^\tau$\\
          };
   
  \draw[thick,->] (3,2) -- node [text width=3cm,midway,above,align=center] 
        {$\blindval_1,\ldots,\blindval_{|\bucketset|}$
        } ++(2,0);

    \node(serverline2)[align=left,minimum width=3,anchor=north] at ($(server)+(-0.5,1.4)$)
          {
            For $i\in[0,|\bucketset|]$:\\ 
            \myInd $\blindprfval_i \gets \blindval_i^K$
          };
    
  \draw[thick,<-] (3,2.9) -- node [text width=3cm,midway,above,align=center] 
        {$\blindprfval_1,\ldots,\blindprfval_{|\bucketset|}$
        } ++(2,0);

    \node(clientline2) [align=left,minimum width=3,anchor=north west] at (0.1,2.4)
          {
            For $i\in[0,|\bucketset|]$:\\ 
            \myInd $\prfval'_i \gets {{{\blindprfval_i}^{\tau^{-1}}}}$\\
            
          };

\end{tikzpicture}
\vspace{-0.7cm}
}
\caption{\label{fig:sssp}$F_{(\cdot)}(\cdot)$ is a PRF. $SS(\cdot)$  and $Rec(\cdot, \cdot)$ are from the secure sketch protocol.}
\end{figure}

\section{Correctness bound of $\EmbedScheme_{LSH}$}
\label{sec:correctness}

\begin{theorem}
For any $\image, \image' \in \imagespace$ with $\distimage(\image,\image')<\threshold$,
when applying $\EmbedScheme_{LSH}$ with embedding length $\emblen$, flipping bias $\flipbias$ and coarse threshold $\coarsethreshold$,
for any $\beta > 1$ and $\coarsethreshold > \emblen\frac{\threshold + \beta \len \flipbias}{\len}$,
we have 
\begin{equation*}
\begin{multlined}
\prob[\Sim_{LSH}(\Embed_{LSH}(\image),\image') = \true]\\
\indent > (1- e^{-2\len(\beta-1)^2\flipbias^2})
\cdot(1-e^{-2\emblen(\frac{\coarsethreshold}{\emblen}-\frac{\threshold + \beta \len \flipbias}{\len})^2})\;.
\end{multlined}
\end{equation*}
\end{theorem}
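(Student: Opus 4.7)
The plan is to prove the bound via a two-stage concentration argument: first control the randomness introduced by $\flip_\flipbias$, and then control the randomness introduced by the uniform sampling $\indexfunc$. Because $\flip_\flipbias$ acts independently on each coordinate, I can reorder the two operations without altering the joint distribution of $(\indexfunc, \pnoise)$: equivalently, first flip every bit of $\simimage = \simhash(\image)$ independently with probability $\flipbias$ to obtain a noisy full-length string $\vnoise$, and then apply $\indexfunc$ to extract $\pnoise = \indexfunc(\vnoise)$. This reformulation is what allows $\len$ (rather than $\emblen$) to appear in the first factor of the bound.

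Let $\simimage' = \simhash(\image')$ and $D = \Delta(\simimage, \simimage') < \threshold$. Let $N$ count the total number of bit flips across all $\len$ positions, so that $N \sim \mathrm{Bin}(\len, \flipbias)$. The triangle inequality gives $\Delta(\vnoise, \simimage') \leq D + N < \threshold + N$. Hoeffding's inequality for sums of independent Bernoullis yields
\[
\Pr[N > \beta \len \flipbias] \;\leq\; \exp\bigl(-2\len(\beta - 1)^2 \flipbias^2\bigr),
\]
obtained by setting the deviation parameter to $(\beta-1)\len\flipbias$. Let $A$ denote the event $\{N \leq \beta \len \flipbias\}$; then $\Pr[A] \geq 1 - e^{-2\len(\beta-1)^2\flipbias^2}$, and on $A$ we have $\Delta(\vnoise, \simimage') < \threshold + \beta \len \flipbias$.

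Next, fix any $\vnoise$ and consider $M = \Delta(\indexfunc(\vnoise), \indexfunc(\simimage'))$. Since $\indexfunc$ is a uniform size-$\emblen$ subset of $[\len]$ sampled without replacement, $M$ counts how many of the $\Delta(\vnoise, \simimage')$ ``differing'' positions fall into the sample and is therefore hypergeometric. Hoeffding's bound for sampling without replacement gives, for any $\epsilon > 0$,
\[
\Pr\!\left[\tfrac{M}{\emblen} \geq \tfrac{\Delta(\vnoise, \simimage')}{\len} + \epsilon\right] \;\leq\; \exp(-2\emblen \epsilon^2).
\]
Setting $\epsilon = \coarsethreshold/\emblen - (\threshold + \beta\len\flipbias)/\len$ --- positive by hypothesis --- and using the step-2 bound uniformly over all $\vnoise$ consistent with $A$, I obtain $\Pr[M \leq \coarsethreshold \mid A] \geq 1 - e^{-2\emblen\epsilon^2}$. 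Combining via $\Pr[M \leq \coarsethreshold] \geq \Pr[A]\cdot\Pr[M \leq \coarsethreshold \mid A]$ and substituting $\epsilon$ yields the stated inequality.

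The main obstacle is not any deep analytical step but rather assembling the right pieces carefully: verifying that the flip-then-subsample reordering is genuinely distribution-preserving (so that the first factor involves $\len$, not $\emblen$), and invoking the correct Hoeffding variant for the hypergeometric variable $M$ (Hoeffding's 1963 result that sampling without replacement concentrates at least as well as sampling with replacement). A secondary subtlety is that the triangle-inequality bound $\Delta(\vnoise, \simimage') \leq D + N$ is loose --- a flip of an already-differing coordinate actually \emph{reduces} the distance --- but since we only need a lower bound on correctness, this slack is harmless and does not affect the form of the stated bound.
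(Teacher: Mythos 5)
Your proposal is correct and follows essentially the same route as the paper's proof: reorder the flip and the index-sampling, bound the binomial flip count $N\sim\mathrm{Bin}(\len,\flipbias)$ via Hoeffding with deviation $(\beta-1)\len\flipbias$, use the triangle inequality to control $\Delta(\vnoise,\simimage')$, apply Hoeffding for the hypergeometric subsampled distance, and multiply the two factors. The only cosmetic difference is that you condition on the event $\{N\le\beta\len\flipbias\}$ while the paper conditions directly on $\{\Delta(\vnoise,\simimage')<\threshold+\beta\len\flipbias\}$; the former implies the latter, so the arguments coincide.
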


\begin{proof}
$\Embed_{LSH}$ works as follows: 
First, the algorithm samples an index function family $\indexfunc$ as 
a combination of $\emblen$ functions sampled uniformly from $\indexfuncset$ without replacement. 
The function $\indexfunc$ is then applied on the similarity embedding of $\image$,
hence we have $\embimage=\indexfunc(\embed(\image))$.
A randomized algorithm $\flip_\flipbias$ that takes as input
a bit string $\embimage$ and outputs $\pnoise$ of the same length,
setting $\pnoise_i = \embimage_i$ with probability
$1 - \flipbias$ and $\pnoise_i = \lnot \embimage_i$ with probability $\flipbias$. 
A threshold $\coarsethreshold$ defined with the embedding scheme 
is used as the coarse threshold for
the Hamming distance over the randomly selected indexes.

Swapping the order of applying $\flip_{\flipbias}$ and $\indexfunc$ doesn't change the output of $\Embed(\image)$.
Hence, 
we apply $\flip_\flipbias$ on $\embed(\image)$ first.
Let $\simimage=\embed(\image)$ 
and $\vnoise=\flip_\flipbias(\embimage)$.
We define $X=\Sigma_{i=1}^\len X_i$, where the random variable $X_i=1$ if and only if $\vnoise_i$ is set to $\lnot \simimage_i$.
Since each index is flipped independently with probability $\flipbias$ as in a Bernoulli trial,
we have that $X$ is sampled from a binomial distribution $X \sim B(\len, \flipbias)$, where $\e[X]=\len\flipbias$.
Thus according to Hoeffding's inequality, for any $\delta > \len\flipbias$,
$\prob[X \geq \delta] \leq e^{-2\len(\frac{\delta}{\len}-\flipbias)^2}\;$.

For any $\image,\image' \in \imagespace$  with $\distimage(\image,\image') < \threshold$,
let $\simimage=\embed(\image)$, $\vnoise=\flip_\flipbias(\simimage)$ and $\simimage'=\embed(\image')$.
We have $\Delta(\vnoise,\simimage') \leq X+\Delta(\simimage,\simimage')$,
$X$ as defined previously.
Hence, in the case of $\Delta(\simimage,\simimage') < \threshold$, for self-defined $\beta$, such that $\beta> 1$,
\begin{align*}
\prob[\Delta(\vnoise, \simimage') \geq \threshold + \beta \len\flipbias] &\leq \prob[X \geq \threshold - \Delta(\simimage, \simimage') + \beta \len \flipbias]\\
&\leq e^{-2\len(\frac{(\threshold-\Delta(\simimage,\simimage')}{\len} + (\beta -1 )\flipbias)^2}\\
&\leq e^{-2\len(\beta -1)^2\flipbias^2}
\end{align*}

Consider a randomly chosen $\indexfunc$ from $\indexfuncset$,
we define a random variable $Y$ as the number of indexing functions being drawn, such that $\vnoise[\indexfunc] \neq \simimage'[\indexfunc]$.
Note that $Y=\Delta(\indexfunc(\vnoise), \indexfunc(\simimage')))$,
and it is drawn from a hypergeometric distribution.

Thus, with Hoeffding's inequality, we have 
$\prob[Y \geq \coarsethreshold] \leq e^{-2\emblen(\frac{\coarsethreshold}{\emblen}-\frac{\Delta(\vnoise, \simimage')}{\len})^2}\;$.
In the case where $\coarsethreshold > \emblen \cdot \frac{\threshold+ \beta \len \flipbias}{\len}$, 
$\prob[\Delta(\indexfunc(\vnoise), \indexfunc(\simimage')) \geq \coarsethreshold \mid \Delta(\vnoise, \simimage') < \threshold + \beta \len\flipbias] \leq e^{-2\emblen(\frac{\coarsethreshold}{\emblen}-\frac{\threshold + \beta \len \flipbias}{\len})^2}\;$.
Hence, we have,
\begin{align*}
\prob[\Sim_{LSH}&(\Embed_{LSH}(\image),\image') = \true]\\
&= \prob[\Delta(\Embed_{LSH}(\image),\Embed_{LSH}(\image')) < \coarsethreshold] \\
&> \prob[\Delta(\vnoise, \simimage') < \threshold + \beta \len\flipbias]\\
&\qquad\cdot \prob[\Delta(\indexfunc(\vnoise), \indexfunc(\simimage')) < \coarsethreshold \mid \Delta(\vnoise, \simimage') < \threshold + \beta \len\flipbias]\\
&> (1- e^{-2\len(\beta-1)^2\flipbias^2})(1-e^{-2\emblen(\frac{\coarsethreshold}{\emblen}-\frac{\threshold + \beta \len \flipbias}{\len})^2})\;.
\end{align*}

\end{proof}
\section{Probability Computed by Bayes Optimal Adversary}
\label{ap:bayes}

We prove Theorem~\ref{thm:bayes-opt} in this section and recall it below.
\bayesOpt*

\begin{proof}
By Bayes' theorem,
\begin{equation*}
\begin{multlined}
\Pr[\simimage = \simimage_{adv} \mid \Embed(\image) = (I, p)] \\
= \dfrac{\Pr[\Embed(\image) = (I, p) \mid \simimage = \simimage_{adv}] \cdot \Pr[\simimage = \simimage_{adv}]}{\Pr[\Embed(\image) = (I, p)]}
\end{multlined}
\end{equation*}

Observe that $\Pr[\simimage = \simimage_{adv}] = \distf(\simimage_{adv})$ and that

\begin{align*}
    \Pr&[\Embed(\image) = (I, p) \mid \simimage = \simimage_{adv}] \\
    &= \Pr[\Embed(\image) = (I, p) \mid I, \simimage = \simimage_{adv}] \Pr[I]\\
    &= \gamma^{\Delta_I(\simimage_{adv}, p)} \cdot (1 - \gamma)^{d - \Delta_I(\simimage_{adv}, p)} \cdot \Pr[I]
\end{align*}
The last equality holds by the independence of the $\gamma$-biased coin flips, with $(1 - \gamma)$ corresponding to the bits in common and $\gamma$ corresponding to the differing bits.

Now, note that $$\Pr[\Embed(\image) = (I, p)] = \Pr[\Embed(\image) = (I, p) \mid I] \Pr[I]\;.$$ Observe that

\begin{align*}
    \Pr&[\Embed(\image) = (I, p) \mid I] \\
    &= \sum\limits_{\simimage' \in \bits^\len} \Pr[\Embed(\image) = (I, p) \mid I, \simhash(\image) = \simimage'] \cdot \Pr[\simhash(\image) = \simimage']\\
    &= \sum\limits_{\simimage' \in \bits^\len} \gamma^{\Delta_I(\simimage', p)} \cdot (1 - \gamma)^{d - \Delta_I(\simimage', p)}\cdot \distf(\simimage')
\end{align*}

Plugging back into our original expression and cancelling the $\Pr[I]$ terms completes the proof of the theorem.
\end{proof}

\begin{figure*}[h!]
\centering
\begin{adjustbox}{width=\textwidth}
   \begin{tikzpicture}
 \begin{axis}[
    name=plot1,
    xbar stacked,
	bar width=9pt,
	y=12pt,
	ymin=-0.5,
	ymax=3.5,
	xmin=0,
	xmax=1,
	height=3.5cm,
    width=0.44\textwidth,
    enlarge y limits=0.1,
    legend style={at={(1.15,1)},
      anchor=north,legend columns=1, font=\footnotesize},
    xlabel={Percentage of Similarity Embeddings},
    xlabel style = {font=\footnotesize},
    ticklabel style = {font=\footnotesize},
    ytick={0, 1, 2, 3},
    yticklabels={{$\threshold=0$}, {$\threshold=32$},
    {$\threshold=64$}, {$\threshold=70$}},
    xticklabel={\pgfmathparse{\tick*100}\pgfmathprintnumber{\pgfmathresult}\%},
    ]

\addplot[fill=clr2_1!25, draw=white, xbar] plot coordinates {(0.7846850917990699,0) (0.6849025088946387,1) 
  (0.6380505244804184,2) (0.6241696893108372,3)};
\addlegendentry{$1$}

\addplot[fill=clr2_1!50, draw=white, xbar] plot coordinates {(0.2036748954571691,0) (0.2623815103489738,1) 
  (0.2842205310522534,2) (0.2900538677181603,3)};
\addlegendentry{$(1, 10]$}
 \addplot[fill=clr2_1!75, draw=white, xbar] plot coordinates {(0.011577360361837501,0) (0.04941806601428208,1) 
  (0.06879898049911287, 2) (0.07533482366020547,3)};
\addlegendentry{$(10, 100]$}
  \addplot[fill=clr2_1!100, draw=white, xbar] plot coordinates {(6.265238192358809e-05,0) (0.003297914742105467,1) 
  (0.008929963968215247,2) (0.010441619310797139,3)};
\addlegendentry{$> 100$}

 \addplot[fill=gray!0, opacity=0, draw=white, xbar, forget plot] plot coordinates {(-1,1) (-1,3) 
  (-1,5) (-1,7)};
 
\end{axis}
    \begin{axis}[
                      name=plot2,
            at=(plot1.right of south east), anchor=left of south west,
    	    legend columns = 1,
    	    ticklabel style = {font=\footnotesize},
    	    xlabel={Coarse Embedding Length $\emblen$},
    	    ylabel= {Adversarial advantage},
            ylabel style={font=\footnotesize, at={(axis description cs:0.08,.5)},anchor=south},
	        xshift = 0.3cm,
    	    xlabel style = {font=\footnotesize},
    	    xtick = {8,9,10,11,12,13,14,15,16,17.5},
            xticklabels = {$8$,$9$,$10$,$11$,$12$,$13$,$14$,$15$,$16$},
            xticklabel style={align=center},
            yticklabel={\pgfmathparse{\tick*100}\pgfmathprintnumber{\pgfmathresult}\%},
    		height=3.5cm,
    		width=0.44\textwidth,
    		scaled y ticks=false,
    		legend style={/tikz/every even column/.append style={column sep=0.5cm}, at={(1.25,1),font=\footnotesize},anchor=north east,name=legend}
    	]
        \addplot[clr3_1, only marks,error bars/.cd, y dir=both, y explicit] coordinates {
(8.0000, 0.9981) += (0, 0.0002) -= (0, 0.0002)
(9.0000, 0.9990) += (0, 0.0001) -= (0, 0.0001)
(10.0000, 0.9994) += (0, 0.0001) -= (0, 0.0001)
(11.0000, 0.9997) += (0, 0.0001) -= (0, 0.0001)
(12.0000, 0.9998) += (0, 0.0000) -= (0, 0.0000)
(13.0000, 0.9999) += (0, 0.0000) -= (0, 0.0000)
(14.0000, 1.0000) += (0, 0.0000) -= (0, 0.0000)
(15.0000, 1.0000) += (0, 0.0000) -= (0, 0.0000)
(16.0000, 1.0000) += (0, 0.0000) -= (0, 0.0000)
    	};  		
      \addplot[clr3_2, only marks,error bars/.cd, y dir=both, y explicit,] coordinates {
(8.0000, 0.0649) += (0, 0.0057) -= (0, 0.0057)
(9.0000, 0.1160) += (0, 0.0058) -= (0, 0.0058)
(10.0000, 0.1939) += (0, 0.0208) -= (0, 0.0208)
(11.0000, 0.3132) += (0, 0.0365) -= (0, 0.0365)
(12.0000, 0.4602) += (0, 0.0284) -= (0, 0.0284)
(13.0000, 0.6624) += (0, 0.0349) -= (0, 0.0349)
(14.0000, 0.7693) += (0, 0.0141) -= (0, 0.0141)
(15.0000, 0.8523) += (0, 0.0329) -= (0, 0.0329)
(16.0000, 0.8950) += (0, 0.0313) -= (0, 0.0313)

       };

    	\addplot[clr3_3, only marks,error bars/.cd, y dir=both, y explicit,] coordinates {

(8.0000, 0.0000) += (0, 0.0000) -= (0, 0.0000)
(9.0000, 0.0000) += (0, 0.0000) -= (0, 0.0000)
(10.0000, 0.0000) += (0, 0.0000) -= (0, 0.0000)
(11.0000, 0.0000) += (0, 0.0000) -= (0, 0.0000)
(12.0000, 0.0108) += (0, 0.0221) -= (0, 0.0108)
(13.0000, 0.4825) += (0, 0.0851) -= (0, 0.0851)
(14.0000, 0.6994) += (0, 0.0236) -= (0, 0.0236)
(15.0000, 0.8237) += (0, 0.0450) -= (0, 0.0450)
(16.0000, 0.8803) += (0, 0.0391) -= (0, 0.0391)

    	};
    	\addplot[clr3_1, only marks, mark=diamond*,mark options={scale=2}] coordinates {(17.5, 0.964756)};
    	\addplot[clr3_2, only marks, mark=diamond*,mark options={scale=2}] coordinates {(17.5, 0.965956)};
    	\addplot[clr3_3, only marks,mark=diamond*,mark options={scale=2}] coordinates {(17.5, 0.999980)};
    	\addplot[red, line width=1pt] coordinates {
    	(17, 0.85) (18, 0.85)};
    	\addplot[red, line width=1pt] coordinates {
    	(17, 0.85) (17, 1.15)};
    	\addplot[red, line width=1pt] coordinates {
    	(18, 0.85) (18, 1.15)};
    	\addplot[red, line width=1pt] coordinates {
    	(17, 1.15) (18, 1.15)};
    	\legend{$\epsilonAuc$, $\epsilonPrecR{\epsilonRecall=100\%}$, $\epsilonAcc$}
        \node[
            anchor=north west,
            align=left,
            font=\footnotesize,
        ] at (axis cs:16.5,.85)
        {$\EmbedScheme_{cPDQ}$\\$\emblen=16$};

    \end{axis}

     \begin{axis}[
            name=plot3,
            at=(plot1.below south west), anchor=above north west,
    	    legend columns = 1,
    	    ticklabel style = {font=\footnotesize},
    	    xmin=-0.05,
    	    xmax=0.4,
	        ymin=-0.05,
	        ymax=0.75,
            xticklabel style={align=center},
            yticklabel={\pgfmathparse{\tick*100}\pgfmathprintnumber{\pgfmathresult}\%},
            ylabel={$\epsilonPrecR{\epsilonRecall > \epsilonRecallThresh}$},
            ylabel style={font=\footnotesize, at={(axis description cs:0.08,.5)},anchor=south},
            xlabel={Flipping bias $\flipbias$},
            xlabel style = {font=\footnotesize}, 
    		height=3.5cm,
    		width=0.41\textwidth,
    		scaled y ticks=false,
    	    legend style={nodes={scale=.95, font=\footnotesize}}, 
    		legend style={/tikz/every even column/.append style={column sep=0.38cm}, at={(1.01,1)},anchor=north west,font=\footnotesize},
    	]

      \addplot[clr5_1,mark=*,line width=1pt,error bars/.cd, y dir=both, y explicit,] coordinates {
(0.0000, 0.4602) += (0, 0.0284) -= (0, 0.0284)
(0.0500, 0.3576) += (0, 0.0266) -= (0, 0.0266)
(0.1000, 0.2190) += (0, 0.0110) -= (0, 0.0110)
(0.1500, 0.1204) += (0, 0.0163) -= (0, 0.0163)
(0.2000, 0.0718) += (0, 0.0113) -= (0, 0.0113)
(0.2500, 0.0380) += (0, 0.0054) -= (0, 0.0054)
(0.3000, 0.0136) += (0, 0.0031) -= (0, 0.0031)
(0.3500, 0.0049) += (0, 0.0014) -= (0, 0.0014)
       };
       \addplot[clr5_2,mark=*,line width=1pt,error bars/.cd, y dir=both, y explicit,] coordinates {
(0.0000, 0.4602) += (0, 0.0284) -= (0, 0.0284)
(0.0500, 0.3576) += (0, 0.0266) -= (0, 0.0266)
(0.1000, 0.2063) += (0, 0.0283) -= (0, 0.0283)
(0.1500, 0.0439) += (0, 0.0047) -= (0, 0.0047)
(0.2000, 0.0217) += (0, 0.0027) -= (0, 0.0027)
(0.2500, 0.0070) += (0, 0.0007) -= (0, 0.0007)
(0.3000, 0.0033) += (0, 0.0003) -= (0, 0.0003)
(0.3500, 0.0014) += (0, 0.0001) -= (0, 0.0001)
       };
       \addplot[clr5_3,mark=*,line width=1pt,error bars/.cd, y dir=both, y explicit,] coordinates {
(0.0000, 0.4602) += (0, 0.0284) -= (0, 0.0284)
(0.0500, 0.3576) += (0, 0.0266) -= (0, 0.0266)
(0.1000, 0.0618) += (0, 0.0032) -= (0, 0.0032)
(0.1500, 0.0223) += (0, 0.0045) -= (0, 0.0045)
(0.2000, 0.0078) += (0, 0.0003) -= (0, 0.0003)
(0.2500, 0.0032) += (0, 0.0003) -= (0, 0.0003)
(0.3000, 0.0016) += (0, 0.0002) -= (0, 0.0002)
(0.3500, 0.0009) += (0, 0.0001) -= (0, 0.0001)

      };
  		
      \addplot[clr5_4,mark=*,line width=1pt,error bars/.cd, y dir=both, y explicit,] coordinates {
(0.0000, 0.4602) += (0, 0.0284) -= (0, 0.0284)
(0.0500, 0.0891) += (0, 0.0080) -= (0, 0.0080)
(0.1000, 0.0203) += (0, 0.0016) -= (0, 0.0016)
(0.1500, 0.0083) += (0, 0.0016) -= (0, 0.0016)
(0.2000, 0.0032) += (0, 0.0002) -= (0, 0.0002)
(0.2500, 0.0015) += (0, 0.0002) -= (0, 0.0002)
(0.3000, 0.0009) += (0, 0.0001) -= (0, 0.0001)
(0.3500, 0.0006) += (0, 0.0000) -= (0, 0.0000)

       };
       \addplot[clr5_5,mark=*,line width=1pt,error bars/.cd, y dir=both, y explicit,] coordinates {
(0.0000, 0.4602) += (0, 0.0284) -= (0, 0.0284)
(0.0500, 0.0031) += (0, 0.0012) -= (0, 0.0012)
(0.1000, 0.0009) += (0, 0.0003) -= (0, 0.0003)
(0.1500, 0.0005) += (0, 0.0001) -= (0, 0.0001)
(0.2000, 0.0004) += (0, 0.0001) -= (0, 0.0001)
(0.2500, 0.0003) += (0, 0.0000) -= (0, 0.0000)
(0.3000, 0.0003) += (0, 0.0000) -= (0, 0.0000)
(0.3500, 0.0003) += (0, 0.0000) -= (0, 0.0000)
       };
       \addplot [black, no markers, line width=1pt,dashed] coordinates {(-0.5,0.5) (0.6,0.5)};

       \legend{{$\epsilonRecall > 0\%$}, {$\epsilonRecall > 25\%$},
       {$\epsilonRecall > 50\%$},{$\epsilonRecall > 75\%$}, {$\epsilonRecall =100\%$}}
    	\end{axis}
    	
    	\begin{axis}[
             name=plot4,
            at=(plot2.below south west), anchor=above north west,
    	    ticklabel style = {font=\footnotesize},
    	    legend columns = 1,
    	    xmin=-0.05,
    	    xmax=1.05,
    	    ymin=0.85,
    	    ymax=1.02,
            yticklabel={\pgfmathparse{\tick*100}\pgfmathprintnumber{\pgfmathresult}\%},
            xticklabel={\pgfmathparse{\tick*100}\pgfmathprintnumber{\pgfmathresult}\%},
    		height=3.5cm,
            ylabel style={font=\footnotesize, at={(axis description cs:0.08,.5)},anchor=south},
    		width=0.43\textwidth,
    		scaled y ticks=false,
    		ylabel = {Correctness $\epsilon$},
    		xlabel={Compression rate $\alpha$},
    		xlabel style = {font=\footnotesize},
    		legend style={/tikz/every even column/.append style={column sep=0.38cm}, at={(1.02,1)},anchor=north west,font=\footnotesize},
    	]
        \addplot[clr5_1, only marks,error bars/.cd, y dir=both, x dir=both, y explicit, x explicit] coordinates {
        (0.0003, 0.5409) += (0.0000, 0.0010) -= (0.0000, 0.0010)
(0.0037, 0.8819) += (0.0000, 0.0006) -= (0.0000, 0.0006)
(0.0212, 0.9804) += (0.0000, 0.0003) -= (0.0000, 0.0003)
(0.0770, 0.9978) += (0.0000, 0.0001) -= (0.0000, 0.0001)
(0.1987, 0.9998) += (0.0000, 0.0000) -= (0.0000, 0.0000)
(0.3896, 1.0000) += (0.0000, 0.0000) -= (0.0000, 0.0000)
(0.6108, 1.0000) += (0.0000, 0.0000) -= (0.0000, 0.0000)
(0.8016, 1.0000) += (0.0000, 0.0000) -= (0.0000, 0.0000)
(0.9231, 1.0000) += (0.0000, 0.0000) -= (0.0000, 0.0000)
(0.9789, 1.0000) += (0.0000, 0.0000) -= (0.0000, 0.0000)
(0.9963, 1.0000) += (0.0000, 0.0000) -= (0.0000, 0.0000)

    	};  		
      \addplot[clr5_2, only marks,error bars/.cd, y dir=both, y explicit, x dir=both, x explicit] coordinates {
(0.0003, 0.5125) += (0.0000, 0.0010) -= (0.0000, 0.0010)
(0.0037, 0.8560) += (0.0000, 0.0007) -= (0.0000, 0.0007)
(0.0212, 0.9686) += (0.0000, 0.0003) -= (0.0000, 0.0003)
(0.0770, 0.9942) += (0.0000, 0.0001) -= (0.0000, 0.0001)
(0.1987, 0.9990) += (0.0000, 0.0000) -= (0.0000, 0.0000)
(0.3896, 0.9999) += (0.0000, 0.0000) -= (0.0000, 0.0000)
(0.6108, 1.0000) += (0.0000, 0.0000) -= (0.0000, 0.0000)
(0.8016, 1.0000) += (0.0000, 0.0000) -= (0.0000, 0.0000)
(0.9231, 1.0000) += (0.0000, 0.0000) -= (0.0000, 0.0000)
(0.9789, 1.0000) += (0.0000, 0.0000) -= (0.0000, 0.0000)
(0.9963, 1.0000) += (0.0000, 0.0000) -= (0.0000, 0.0000)
       };
    	\addplot[clr5_3, only marks,error bars/.cd, y dir=both, y explicit,] coordinates {
(0.0003, 0.4813) += (0.0000, 0.0009) -= (0.0000, 0.0009)
(0.0037, 0.8120) += (0.0000, 0.0007) -= (0.0000, 0.0007)
(0.0212, 0.9340) += (0.0000, 0.0004) -= (0.0000, 0.0004)
(0.0770, 0.9747) += (0.0000, 0.0002) -= (0.0000, 0.0002)
(0.1987, 0.9907) += (0.0000, 0.0001) -= (0.0000, 0.0001)
(0.3896, 0.9971) += (0.0000, 0.0001) -= (0.0000, 0.0001)
(0.6108, 0.9993) += (0.0000, 0.0000) -= (0.0000, 0.0000)
(0.8016, 0.9999) += (0.0000, 0.0000) -= (0.0000, 0.0000)
(0.9231, 1.0000) += (0.0000, 0.0000) -= (0.0000, 0.0000)
(0.9789, 1.0000) += (0.0000, 0.0000) -= (0.0000, 0.0000)
(0.9963, 1.0000) += (0.0000, 0.0000) -= (0.0000, 0.0000)
    	};
    	\addplot[clr5_4, only marks,error bars/.cd, y dir=both, y explicit,] coordinates {
(0.0003, 0.4731) += (0.0000, 0.0009) -= (0.0000, 0.0009)
(0.0037, 0.7990) += (0.0000, 0.0007) -= (0.0000, 0.0007)
(0.0212, 0.9219) += (0.0000, 0.0004) -= (0.0000, 0.0004)
(0.0770, 0.9663) += (0.0000, 0.0003) -= (0.0000, 0.0003)
(0.1987, 0.9861) += (0.0000, 0.0002) -= (0.0000, 0.0002)
(0.3896, 0.9952) += (0.0000, 0.0001) -= (0.0000, 0.0001)
(0.6108, 0.9986) += (0.0000, 0.0000) -= (0.0000, 0.0000)
(0.8016, 0.9997) += (0.0000, 0.0000) -= (0.0000, 0.0000)
(0.9231, 1.0000) += (0.0000, 0.0000) -= (0.0000, 0.0000)
(0.9789, 1.0000) += (0.0000, 0.0000) -= (0.0000, 0.0000)
(0.9963, 1.0000) += (0.0000, 0.0000) -= (0.0000, 0.0000)
    	};
       \addplot [black, no markers, line width=1pt,dashed] coordinates {(-0.05,0.95) (1.05,0.95)};

        \legend{{$\threshold=0$,$\threshold=32$,$\threshold=64$,$\threshold=70$}};
        \end{axis}

    \end{tikzpicture}
    \end{adjustbox}
    \vspace{-0.7cm}
    \caption{\label{fig:overview-ap} \textbf{Upper-left}: $\threshold$-Neighborhood size distribution. \textbf{Upper-right}: Simulation results of the three security metrics against matching attack, evaluated on (1) $\EmbedScheme_{LSH}$ with $\emblen$ from $8$ to $16$ (round markers), $\flipbias=0$ and (2) $\EmbedScheme_{cPDQ}$ (diamond markers in red box).
    \textbf{Lower-left}: The precision metric $\epsilonPrecR{\epsilonRecall>\epsilonRecallThresh}$ of matching attack, at different recall threshold, with $\emblen=12$ and varying $\flipbias$. 
    The dashed line denotes $50\%$. \textbf{Lower-right}:  Correctness~($\epsilon$, with varying $ \threshold$) and compression efficiency~($\alpha$) tradeoff. ($\emblen=12$, $\flipbias=0.05$)
    Error bars represent the $95\%$ confidence interval.
    }
    
\end{figure*}

\section{SBB to Share Moderation Efforts}
\label{ap:mod}

The SBB mechanism can also be used for social media sites to share moderation efforts across trust boundaries.
Small social media sites have fewer moderation resources hence often have to rely on databases of known bad contents such as the ThreatExchange service~\cite{threatexchange} provided by Facebook,
while preserving the privacy of their users.
In this section, we review the deployment scenario of centralized moderators of social media sites using a third-party database to improve their moderation efforts.
Unlike the deployment scenario where individual users use SBB, the centralized moderator is able to deduplicate the contents that they receive on the forum and reduce the repetition rate of the popular images that are sent in the queries.
Using a dataset of images collected from the imageboard 4chan by Matatov et al.~\cite{matatovdejavu}, 
a social media site that is known for inappropriate content and could use our service, 
we conduct experimental analyses to find suitable parameters of SBB under this setting.
Our results show that while preserving client's privacy and ensuring correctness, SBB compresses the database to $2.1\%$.

Our simulation dataset consists of $1.2$ million posts with images collected from 4chan's politically incorrect board (/pol/) between July 10, 2018 and October 31, 2019,
with $750$ thousand unique PDQHashes.
The repetition rate of popular images are much lower as compared to the deployment scenario of individual users,
as can be observed in \figref{fig:overview-ap}~(upper-left).
Most of the images~($78\%$) don't share the same PDQHash value with others~(first row from bottom, lightest shade).
Each post with an image corresponds to a query for that image to the SBB protocol.
Similar to that in \secref{sec:expr}, we examine the security of SBB against the matching attack of the most popular image,
which appears in $0.02\%$ of all posts.

When comparing the baseline of using $\EmbedScheme_{cPDQ}$ with $\EmbedScheme_{LSH}$,
\figref{fig:overview-ap} (upper-right) shows similar trend as in \figref{fig:privacy}~(left).
First, we notice that the accuracy metric cannot accurately describe the privacy leakage,
while $\epsilonAuc$ remains $100\%$ in all experiments.
Second, across all metrics, $\EmbedScheme_{cPDQ}$ allows the adversary to have perfect improvement when performing the matching attack,
while $\EmbedScheme_{LSH}$ offers better security guarantees.
In particular, when measured by $\epsilonPrecR{\epsilonRecall=100\%}$,
$\EmbedScheme_{LSH}$ performs better than $\EmbedScheme_{cPDQ}$ even when the coarse embeddings are of similar length.
The adversary achieves $89.5\%$ precision upon receiving a $16$-bit LSH-based coarse embedding as request,
as compared to almost $96.6\%$ precision when we apply $\EmbedScheme_{cPDQ}$.
In \figref{fig:overview-ap}~(lower-left), we fix the coarse embedding length at $\emblen=12$ and present the results for $\epsilonPrecR{\epsilonRecall>\epsilonRecallThresh}$ with varying $\flipbias$.
Similar to that in \figref{fig:privacy}~(right), 
for any $\flipbias$, $\epsilonPrecR{\epsilonRecall>\epsilonRecallThresh}$ is smaller than $50\%$~(noted by the horizontal line) for all recall thresholds, satisfying our security requirement as defined in \secref{sec:expr}.

In \figref{fig:overview-ap}~(lower-right), we fix $\emblen=12$ and $\flipbias=0.05$, and vary the value of the coarse threshold $\coarsethreshold$.
Similar to previous experiment, we use the 4chan dataset as $\dist$,
and all the possible values of PDQHash in the dataset as $\imageset$.
The results are aggregated from $2$M requests randomly selected from the dataset with replacement.
The parameter settings are less conservative in this scenario, allowing better efficiency. 
For example,
when $\threshold=32$,
a sufficiently correct embedding scheme with $\epsilon \geq 95\%$~(noted by the dashed horizontal line), 
produces a smaller bucket~($2.1\%$) in \figref{fig:overview-ap}~(lower-right) than that~($9.3\%$) in \figref{fig:accuracy}~(left).

The distribution of images shared by clients 
may differ depending on the deployment context.
Such difference impacts the adversarial advantage when performing matching attacks.
Our framework allows simulation on real-world social media datasets,
in order to find good parameters for $\EmbedScheme_{LSH}$ when navigating the privacy/efficiency tradeoff.

\begin{figure*}
\centering
\begin{adjustbox}{width=\textwidth}
   \begin{tikzpicture}
     \begin{axis}[
            name=plot1,
    	    legend columns = 1,
    	    ticklabel style = {font=\footnotesize},
    	    xmin=0.5,
    	    xmax=5.5,
	        ymin=-0.05,
	        ymax=1.05,
            xticklabel style={align=center},
            yticklabel={\pgfmathparse{\tick*100}\pgfmathprintnumber{\pgfmathresult}\%},
            ylabel={$\epsilonPrecR{\epsilonRecall > \epsilonRecallThresh}$},
            ylabel style={font=\footnotesize, at={(axis description cs:0.08,.5)},anchor=south},
            xlabel={Repetition frequency \repeats},
            xlabel style = {font=\footnotesize}, 
    		height=3.5cm,
    		width=0.41\textwidth,
    		scaled y ticks=false,
    	    legend style={nodes={scale=.95, font=\footnotesize}}, 
    		legend style={/tikz/every even column/.append style={column sep=0.38cm}, at={(1.01,1)},anchor=north west,font=\footnotesize},
    	]

      \addplot[clr5_1,mark=*,line width=1pt,error bars/.cd, y dir=both, y explicit,] coordinates {
(1.0000, 0.4090) += (0, 0.0285) -= (0, 0.0285)
(2.0000, 0.9956) += (0, 0.0020) -= (0, 0.0020)
(3.0000, 1.0000) += (0, 0.0000) -= (0, 0.0000)
(4.0000, 1.0000) += (0, 0.0000) -= (0, 0.0000)
(5.0000, 1.0000) += (0, 0.0000) -= (0, 0.0000)

       };
       \addplot[clr5_2,mark=*,line width=1pt,error bars/.cd, y dir=both, y explicit,] coordinates {
(1.0000, 0.4090) += (0, 0.0285) -= (0, 0.0285)
(2.0000, 0.9956) += (0, 0.0020) -= (0, 0.0020)
(3.0000, 1.0000) += (0, 0.0000) -= (0, 0.0000)
(4.0000, 1.0000) += (0, 0.0000) -= (0, 0.0000)
(5.0000, 1.0000) += (0, 0.0000) -= (0, 0.0000)
       };
       \addplot[clr5_3,mark=*,line width=1pt,error bars/.cd, y dir=both, y explicit,] coordinates {
(1.0000, 0.4090) += (0, 0.0285) -= (0, 0.0285)
(2.0000, 0.9867) += (0, 0.0034) -= (0, 0.0034)
(3.0000, 0.9999) += (0, 0.0001) -= (0, 0.0001)
(4.0000, 1.0000) += (0, 0.0000) -= (0, 0.0000)
(5.0000, 1.0000) += (0, 0.0000) -= (0, 0.0000)

      };
  		
      \addplot[clr5_4,mark=*,line width=1pt,error bars/.cd, y dir=both, y explicit,] coordinates {
(1.0000, 0.1471) += (0, 0.0060) -= (0, 0.0060)
(2.0000, 0.9592) += (0, 0.0065) -= (0, 0.0065)
(3.0000, 0.9997) += (0, 0.0002) -= (0, 0.0002)
(4.0000, 1.0000) += (0, 0.0000) -= (0, 0.0000)
(5.0000, 1.0000) += (0, 0.0000) -= (0, 0.0000)

       };
       \addplot[clr5_5,mark=*,line width=1pt,error bars/.cd, y dir=both, y explicit,] coordinates {
(1.0000, 0.0057) += (0, 0.0013) -= (0, 0.0013)
(2.0000, 0.0657) += (0, 0.0141) -= (0, 0.0141)
(3.0000, 0.5056) += (0, 0.1926) -= (0, 0.1926)
(4.0000, 0.9740) += (0, 0.0177) -= (0, 0.0177)
(5.0000, 1.0000) += (0, 0.0000) -= (0, 0.0000)

       };
       \addplot [black, no markers, line width=1pt,dashed] coordinates {(0.5,0.5) (5.5,0.5)};

       \legend{{$\epsilonRecall > 0\%$}, {$\epsilonRecall > 25\%$},
       {$\epsilonRecall > 50\%$},{$\epsilonRecall > 75\%$}, {$\epsilonRecall =100\%$}}
    	\end{axis}
      \begin{axis}[
            name=plot2,
            at=(plot1.right of south east), anchor=left of south west,
    	    legend columns = 1,
    	    ticklabel style = {font=\footnotesize},
    	    xmin=0.5,
    	    xmax=5.5,
	        ymin=-0.05,
	        ymax=1.05,
            xticklabel style={align=center},
            yticklabel={\pgfmathparse{\tick*100}\pgfmathprintnumber{\pgfmathresult}\%},
            ylabel style={font=\footnotesize, at={(axis description cs:0.08,.5)},anchor=south},
            xlabel={Repetition frequency \repeats},
            xlabel style = {font=\footnotesize}, 
    		height=3.5cm,
    		width=0.41\textwidth,
    		scaled y ticks=false,
    	    legend style={nodes={scale=.95, font=\footnotesize}}, 
    		legend style={/tikz/every even column/.append style={column sep=0.38cm}, at={(1.01,1)},anchor=north west,font=\footnotesize},
    	]

      \addplot[clr5_1,mark=*,line width=1pt,error bars/.cd, y dir=both, y explicit,] coordinates {
(1.0000, 0.3896) += (0, 0.0310) -= (0, 0.0310)
(2.0000, 0.5028) += (0, 0.0225) -= (0, 0.0225)
(3.0000, 0.5290) += (0, 0.0227) -= (0, 0.0227)
(4.0000, 0.4859) += (0, 0.0268) -= (0, 0.0268)
(5.0000, 0.5169) += (0, 0.0254) -= (0, 0.0254)
       };
       \addplot[clr5_2,mark=*,line width=1pt,error bars/.cd, y dir=both, y explicit,] coordinates {
(1.0000, 0.3896) += (0, 0.0310) -= (0, 0.0310)
(2.0000, 0.5028) += (0, 0.0225) -= (0, 0.0225)
(3.0000, 0.5290) += (0, 0.0227) -= (0, 0.0227)
(4.0000, 0.4812) += (0, 0.0246) -= (0, 0.0246)
(5.0000, 0.5107) += (0, 0.0273) -= (0, 0.0273)
       };
       \addplot[clr5_3,mark=*,line width=1pt,error bars/.cd, y dir=both, y explicit,] coordinates {
(1.0000, 0.3896) += (0, 0.0310) -= (0, 0.0310)
(2.0000, 0.4585) += (0, 0.0209) -= (0, 0.0209)
(3.0000, 0.5243) += (0, 0.0236) -= (0, 0.0236)
(4.0000, 0.4796) += (0, 0.0239) -= (0, 0.0239)
(5.0000, 0.5084) += (0, 0.0269) -= (0, 0.0269)
      };
  		
      \addplot[clr5_4,mark=*,line width=1pt,error bars/.cd, y dir=both, y explicit,] coordinates {
(1.0000, 0.1436) += (0, 0.0062) -= (0, 0.0062)
(2.0000, 0.4000) += (0, 0.0167) -= (0, 0.0167)
(3.0000, 0.5207) += (0, 0.0233) -= (0, 0.0233)
(4.0000, 0.4767) += (0, 0.0235) -= (0, 0.0235)
(5.0000, 0.5060) += (0, 0.0280) -= (0, 0.0280)

       };
       \addplot[clr5_5,mark=*,line width=1pt,error bars/.cd, y dir=both, y explicit,] coordinates {
(1.0000, 0.0058) += (0, 0.0013) -= (0, 0.0013)
(2.0000, 0.0512) += (0, 0.0081) -= (0, 0.0081)
(3.0000, 0.1162) += (0, 0.0383) -= (0, 0.0383)
(4.0000, 0.2169) += (0, 0.0593) -= (0, 0.0593)
(5.0000, 0.3851) += (0, 0.0695) -= (0, 0.0695)

       };
       \addplot [black, no markers, line width=1pt,dashed] coordinates {(0.5,0.5) (5.5,0.5)};

       \legend{{$\epsilonRecall > 0\%$}, {$\epsilonRecall > 25\%$},
       {$\epsilonRecall > 50\%$},{$\epsilonRecall > 75\%$}, {$\epsilonRecall =100\%$}}
    	\end{axis}
     \end{tikzpicture}
    \end{adjustbox}
    \vspace{-0.7cm}
    \caption{\label{fig:repeated-ap} The precision metric $\epsilonPrecR{\epsilonRecall>\epsilonRecallThresh}$ of matching attack with randomized~(\textbf{left}) and fixed~(\textbf{right}) indexing functions, at different recall threshold, with $\emblen=9$, $\flipbias=0.05$ and varying repetition rate $\repeats$. 
    The dashed lines denote $50\%$. 
    }
    
\end{figure*}

\section{Simulating  Repeated Queries}
\label{ap:repeated}

We consider the following attack scenario in this section.
An adversary who receives $\repeats$ embeddings and is able to infer that the embeddings correspond to the same image aims to guess if the image has the same PDQHash value as an adversary chosen image $\image_{adv}$.
The embeddings are generated with independent coins.
We first generalize from Theorem~\ref{thm:bayes-opt} to compute the probability used by a Bayes optimal adversary.

\begin{theorem}

Let $\Delta_I(\simimage, p) = \Delta(I(\simimage), p)$
for a similarity hash $\simimage \in \bits^\len$.
Consider a fixed image $\image_{adv} \in \imagespace$.

Suppose we sample an image $\image$ and generate $q$ embeddings for this image as follows:

$$\image \getsr \dist, (I_1, p_1) \getsr \Embed(\image), \ldots, (I_q, p_q) \getsr \Embed(\image)$$.

Let $\simimage_{adv} = \simhash(\image_{adv})$ and $\simimage = \simhash(\image)$. Furthermore, consider the distribution $\distf$ that $\simhash$ induces on $\bits^\len$, where $$\distf(x) = \Pr[\simhash(\image) = x] = \sum\limits_{\substack{\image' \in \imagespace,\\ \simhash(\image') = x}} \dist(\image')$$ for all $x \in \bits^\len$.

We have that 
\begin{equation*}
\begin{multlined}
\Pr[\simimage = \simimage_{adv} \mid (I_1, p_1), \ldots, (I_\repeats, p_\repeats)] \\
= \dfrac{\gamma^{\delta(\simimage_{adv})}\cdot (1 - \gamma)^{\repeats \cdot d - \delta(\simimage_{adv})} \cdot \distf(\image_{adv})}{\sum\limits_{\simimage' \in \bits^\len} \gamma^{\delta(\simimage')}\cdot (1 - \gamma)^{\repeats \cdot d - \delta(\simimage')} \cdot \distf(\image')}
\end{multlined}
\end{equation*}
where $\delta(\simimage) = \sum\limits_{j \in [q]} \Delta_{I_j}(\simimage, p_j)$ and the probability is over the sampling of $\image$ and the coins used in the invocations of $\Embed$.
\end{theorem}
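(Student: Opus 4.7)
The plan is to adapt the Bayesian decomposition used in the proof of Theorem~\ref{thm:bayes-opt} to the case of $\repeats$ independent embedding invocations on a common underlying image. First I would apply Bayes' theorem to write
\begin{equation*}
\Pr[\simimage = \simimage_{adv} \mid (I_1, p_1), \ldots, (I_\repeats, p_\repeats)] = \frac{\Pr[(I_1, p_1), \ldots, (I_\repeats, p_\repeats) \mid \simimage = \simimage_{adv}] \cdot \distf(\simimage_{adv})}{\Pr[(I_1, p_1), \ldots, (I_\repeats, p_\repeats)]},
\end{equation*}
observing that $\Pr[\simimage = \simimage_{adv}] = \distf(\simimage_{adv})$ exactly as in the single-query case.

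The key step is factoring the conditional likelihood. Given $\simimage = \simimage_{adv}$, the $\repeats$ invocations of $\Embed(\image)$ use completely independent coins: each one independently samples a fresh $I_j \getsr \indexfuncset$ and then applies $\flip_\flipbias$ with independent per-coordinate Bernoulli flips, per Figure~\ref{fig:sbb-protocol}. Therefore the joint likelihood factorizes as $\prod_{j=1}^{\repeats} \Pr[(I_j, p_j) \mid \simimage = \simimage_{adv}]$, and by the per-query computation from Theorem~\ref{thm:bayes-opt} each factor equals $\gamma^{\Delta_{I_j}(\simimage_{adv}, p_j)} \cdot (1-\gamma)^{d - \Delta_{I_j}(\simimage_{adv}, p_j)} \cdot \Pr[I_j]$. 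Collecting exponents across $j$ and using the definition of $\delta(\cdot)$ yields the aggregate likelihood $\gamma^{\delta(\simimage_{adv})} \cdot (1-\gamma)^{\repeats \cdot d - \delta(\simimage_{adv})} \cdot \prod_{j=1}^{\repeats} \Pr[I_j]$.

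To handle the denominator I would apply the law of total probability, conditioning on the value of $\simhash(\image)$. This gives
\begin{equation*}
\Pr[(I_1, p_1), \ldots, (I_\repeats, p_\repeats)] = \left(\prod_{j=1}^{\repeats} \Pr[I_j]\right) \cdot \sum_{\simimage' \in \bits^\len} \gamma^{\delta(\simimage')} \cdot (1-\gamma)^{\repeats \cdot d - \delta(\simimage')} \cdot \distf(\simimage'),
\end{equation*}
where the inner structure follows from the same per-query argument applied to each candidate $\simimage'$. The common factor $\prod_{j} \Pr[I_j]$ cancels between numerator and denominator, delivering exactly the formula in the theorem statement.

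The only real subtlety --- more a bookkeeping concern than a mathematical obstacle --- is justifying the independence of the $\repeats$ embedding draws given the fixed image, which is what makes the likelihood factorize cleanly and permits the clean summing of Hamming distances in the exponent of $\gamma$. Once that independence is established by inspection of $\Embed_{LSH}$, the rest is a mechanical generalization of the single-query argument, and no new inequality or tail bound is needed.
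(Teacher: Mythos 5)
Your proposal is correct and follows essentially the same route as the paper, which proves this result by noting that the argument of Theorem~\ref{thm:bayes-opt} carries over once the conditional likelihood factorizes across the $\repeats$ independent invocations of $\Embed$, yielding the summed Hamming distances $\delta(\cdot)$ in the exponents. Your write-up simply makes explicit the Bayes decomposition, the cancellation of the $\prod_j \Pr[I_j]$ factors, and the independence justification that the paper leaves implicit.
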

\begin{proof}
The proof follows that of Theorem~\ref{thm:bayes-opt} except we now consider independent bit flips over multiple embeddings corresponding the same image $\image$, hence the sum in the exponents of $\gamma$ and $1 - \gamma$.
\end{proof}

We experiment with the deployment scenario of when individual users use our services and follow the experimental setup as in \secref{sec:data} with the same dataset.
\figref{fig:repeated-ap} presents the adversary advantage as measured by the precision metric $\epsilonPrecR{\epsilonRecall>\epsilonRecallThresh}$~(Y axes) with the increase of the repetition frequency $\repeats$~(X axes).
The left figure shows the drastically increasing adversary advantage when embeddings are randomized.
Even with low repetition rates, e.g. $\repeats=2$, $\epsilonPrecR{\epsilonRecall>0\%}$ is almost $100\%$.
This is because for two randomized queries of the same image,
the amount of information leaked to the adversary is around $2\emblen$ bits when the repetition rate of picking the same indexing function is low.
Hence the privacy loss of $\emblen=9$ and $\repeats=2$ would be similar to that of $\emblen=18$,
which is detrimental.
However, when the embeddings are derandomized,
i.e. in the case when indexing functions are fixed for repeated queries, 
the adversary advantage is at most around $50\%$ as shown in the right figure, even for $\epsilonPrecR{\epsilonRecall>0\%}$ and $\repeats=5$.
This is because the adversary doesn't receive much extra information from the repeated queries, nevertheless, $\epsilonPrecR{\epsilonRecall=100\%}$ keeps increasing with the increase of $\repeats$,
as the adversary becomes more certain when predicting more cases of $\image$ as positive.

%%%%%%%%%%%%%%%%%%%%%%%%%%%%%%%%%%%%%%%%%%%%%%%%%%%%%%%%%%%%%%%%%%%%%%%%%%%%%%%%
\end{document}